\newtheorem{theorem}{\textbf{Theorem}}
\newtheorem*{problem*}{\textbf{Problem}}
\newtheorem{definition}{\textbf{Definition}}
\newtheorem{lemma}{\textbf{Lemma}}
\newtheorem{remark}{\textbf{Remark}}
\renewcommand\footnoterule{\kern-3pt \hrule width 2in \kern 2.6pt}
\begin{document}
%
\title{Cooperative event-based rigid formation control}
%
%
%

\author{Zhiyong~Sun,
        Qingchen~Liu,
        Na~Huang,
        Changbin~Yu,
        and~Brian~D.~O.~Anderson
\thanks{Z. Sun is with Department of Automatic Control, Lund University, Sweden. ({\tt\small \{zhiyong.sun@control.lth.se\})}
C. Yu is with  School of Engineering, Westlake University,
Hangzhou, China.
N. Huang and B. D. O. Anderson are with School of Automation, Hangzhou Dianzi University,
Hangzhou, China. Q. Liu and B. D. O. Anderson are with Data61-CSIRO and Research School
of Engineering, The Australian National University, Canberra ACT 2601, Australia.
  }}

%
%

\markboth{Journal of \LaTeX\ Class Files,~Vol.~xx, No.~x, December~2018}%
{Shell \MakeLowercase{\textit{et al.}}: Cooperative event-based formation control}
%



\maketitle

\begin{abstract}
This paper discusses cooperative stabilization control of rigid formations via an event-based approach. We first design a centralized event-based formation control system, in which a central event controller determines the next triggering time and broadcasts the event signal to all the agents for control input update. We then build on this approach to propose a distributed event control strategy, in which each agent can use its local event trigger and local information to update the control input at its own event time.  For both cases, the triggering  condition, event function and triggering behavior are discussed in detail, and the exponential convergence of the event-based formation system is  guaranteed.
\end{abstract}

\begin{IEEEkeywords}
Multi-agent formation; cooperative control; graph rigidity; rigid shape; event-based control.
\end{IEEEkeywords}

%
\IEEEpeerreviewmaketitle

\section{Introduction}
\IEEEPARstart{F}{ormation} control of networked multi-agent systems has
received considerable attention in recent years due to its
extensive applications in many areas including both civil and military fields.
One problem of extensive interest  is \emph{formation shape
control}, i.e.   designing controllers to achieve or maintain a geometrical shape for the formation \cite{oh2012survey}. By using  graph rigidity theory, the formation
shape can be achieved by controlling a certain set of inter-agent
distances \cite{anderson2008rigid, krick2009stabilisation} and there is no requirement on a global coordinate system having to be known to all the agents.   This is in contrast to the linear \emph{consensus-based} formation control approach, in which the target formation is defined by a certain set of relative positions and a global coordinate system is required for all the agents to implement the  consensus-based formation control law (see detailed comparisons in \cite{oh2012survey}). Note that such a coordinate alignment condition is a rather strict requirement, which is undesirable for implementing formation controllers in e.g. a GPS-denied environment. Even if one assumes that such coordinate alignment is satisfied for all agents, slight coordinate misalignment, perhaps arising from sensor biases, can lead to a failure of formation control \cite{meng2016formation, meng20183}. Motivated by all these considerations, in this paper we focus on rigidity-based formation control.

 There have been rich works on controller design and stability analysis of rigid   formation control (see e.g. \cite{krick2009stabilisation, dorfler2010geometric, oh2014distance, anderson2014counting} and the review in \cite{oh2012survey}), most of which assume that the control input is updated in a continuous manner. The main objective of this paper is to provide alternative  controllers to stabilize rigid formation shapes based on an event-triggered approach.  This kind of controller design is attractive for real-world robots/vehicles equipped with \emph{digital} sensors or microprocessors \cite{astrom2008event, heemels2012introduction}. Furthermore,  by using an event-triggered mechanism to update the controller input, instead of using a continuous  updating strategy as discussed in e.g. \cite{krick2009stabilisation, dorfler2010geometric, oh2014distance, anderson2014counting}, the formation system can save resources in processors and thus can reduce  much of the computation/actuation burden for each agent.   Due to these favourable properties, event-based control has been studied extensively in recent years for linear and nonlinear systems \cite{sun2018event, ZhongkuiTNNLS,  zhang2018cooperative, Zhongkui2018TAC}, and especially for   networked control systems \cite{seyboth2012event, Duan2018, qin2018leader, jiang2018synchronization}. Examples of successful applications of the event-based control strategy on distributed control systems and networked control systems have been reported  in e.g. \cite{Qingchen2018, zhao2018edge,   duan2018event}.
 We refer the readers to the recent survey papers \cite{ge2018survey, zhang2017overview, PENG2018113, nowzari2017event} which provide  comprehensive and excellent reviews on event-based control for networked coordination and multi-agent systems.

Event-based control strategies in multi-agent formation systems have recently attracted increasing attention in the control community. Some recent attempts at applying event-triggered schemes in stabilizing multi-agent formations are reported in e.g.,  \cite{li2018event, ge2017distributed, viel2017distributed, yi2016formation}. However, these papers \cite{li2018event, ge2017distributed, viel2017distributed, yi2016formation} have focused on the event-triggered  consensus-based formation control approach, in which  the proposed event-based formation control schemes cannot be applied to stabilization control of rigid formation shapes.
We   note that   event-based rigid formation control has been discussed  briefly as an example of team-triggered network coordination in \cite{Cortes_team_triggered}. However, no thorough results have been reported in the literature to achieve cooperative rigid formation control with feasible and simple event-based solutions. This paper is a first contribution to advance the event-based control strategy to the design and implementation of rigid formation control systems.

Some preliminary results have been presented in  conference contributions \cite{sun2015event} and \cite{Liu2015event}. Compared to \cite{sun2015event} and \cite{Liu2015event}, this paper proposes new control methodologies to design event-based controllers as well as event-triggering functions. The event  controller presented in \cite{sun2015event} is a preliminary one, which only updates parts of the  control input and the control is not necessarily piecewise constant. This limitation is removed in the control design in this paper. Also, the complicated controllers in \cite{Liu2015event} have been simplified. Moreover, and centrally to the novelty of the paper, we will focus on the design of \emph{distributed }controllers based on novel event-triggering functions to achieve the cooperative formation task, while the event controllers in both \cite{sun2015event} and \cite{Liu2015event} are centralized. We prove that Zeno behavior is  excluded in the distributed event-based formation control system by   proving a positive lower bound of the inter-event triggering time. We also notice that a decentralized event-triggered control  was recently proposed in \cite{CCC_CHENFEI} to achieve rigid formation tracking control. The triggering strategy in \cite{CCC_CHENFEI} requires each agent to update the control input both at its own triggering time and its neighbors' triggering times, which increases the communication burden within the network. Furthermore, the triggering condition discussed in \cite{CCC_CHENFEI} adopts the position information in the event function design,  which   results in very complicated control functions and may limit their practical applications.

In this paper, we will provide a thorough study of rigid formation stabilization control with cooperative event-based approaches. To be specific, we will propose two feasible event-based control schemes (a centralized triggering scheme and a distributed triggering scheme) that aim to stabilize a general rigid formation shape. In this paper, by a careful design of the triggering condition and event function,
the aforementioned  communication requirements and controller complexity in e.g., \cite{sun2015event, CCC_CHENFEI, Liu2015event} are avoided.
For all cases, the triggering  condition, event function and triggering behavior are discussed in detail.
One of the key results in the controller performance analysis with both centralized and distributed event-based controllers is an exponential stability of the rigid formation system. The exponential stability of the formation control system has important consequences relating to \emph{robustness issues} in undirected rigid formations, as discussed  in the recent papers \cite{SMA14TACsub, sun2017rigid}.
%


The rest of this paper is organized as follows. In Section \ref{sec:pre}, preliminary concepts on graph theory and rigidity theory are introduced. In Section \ref{sec:centralized}, we propose a centralized event-based formation controller and discuss the convergence property and the exclusion of the Zeno behavior. Section \ref{sec:distributed} builds on the centralized scheme of Section \ref{sec:centralized} to develop a distributed event-based controller design, and presents detailed analysis of the triggering behavior and its feasibility. In Section \ref{sec:simulations}, some simulations are provided to demonstrate the controller performance. Finally, Section \ref{sec:conclusions} concludes this paper.

\section{Preliminaries} \label{sec:pre}
\subsection{Notations}
The notations used in this paper are fairly
standard. $\mathbb{R}^n$ denotes the $n$-dimensional Euclidean space. $\mathbb{R}^{m\times n}$
denotes the set of $m\times n$ real matrices. A matrix or vector transpose is denoted by a superscript $T$. The rank, image  and null space of a matrix $M$ are denoted by  $rank(M)$, $Im(M)$ and $ker(M)$, respectively.  The notation $\|M\|$ denotes the induced 2-norm of a matrix $M$ or the 2-norm of a vector $M$, and   $\|M\|_F$ denotes the Frobenius norm for a matrix $M$. Note that there holds $\|M\| \leq \|M\|_F$ for any matrix $M$ (see \cite[Chapter 5]{horn2012matrix}).
We use $\text{diag}\{x\}$ to denote a  diagonal matrix with the  vector $x$ on its diagonal, and $span\{v_1, v_2, \cdots, v_k\}$ to denote the subspace spanned by a set of vectors $v_1, v_2, \cdots, v_k$. The symbol $I_n$ denotes the $n \times n$ identity matrix, and  $\mathbf{1}_n$ denotes an $n$-tuple column vector of all ones.   The notation $\otimes$   represents the Kronecker product.

\subsection{Basic concepts on graph theory}
 Consider an undirected graph with $m$ edges and $n$ vertices, denoted by $\mathcal{G} =( \mathcal{V}, \mathcal{E})$  with vertex set $\mathcal{V} = \{1,2,\cdots, n\}$ and edge set $\mathcal{E} \subset \mathcal{V} \times \mathcal{V}$.  The neighbor set $\mathcal{N}_i$ of node $i$ is defined as $\mathcal{N}_i: = \{j \in \mathcal{V}: (i,j) \in \mathcal{E}\}$.
  The matrix relating the nodes to the edges is called the incidence matrix $H = \{h_{ij}\} \in \mathbb{R}^{m \times n}$, whose entries are defined as (with arbitrary edge orientations for the \emph{undirected} formations considered here)
     \begin{equation}
     h_{ij} =  \left\{
       \begin{array}{cc}
       1,  &\text{ the } i\text{-th edge sinks at node }j  \\ \nonumber
       -1,  &\text{ the } i\text{-th edge leaves  node }j  \\ \nonumber
       0,  & \text{otherwise}  \\
       \end{array}
      \right.
      \end{equation}
An important matrix representation of a graph $\mathcal{G}$ is the Laplacian matrix $L(\mathcal{G})$, which is defined as $L(\mathcal{G}) = H^TH$.  For a connected undirected graph, one has $rank(L) = n-1$ and $ker(L) = ker(H) = span\{\mathbf{1}_n\}$. Note that for the rigid formation modelled by an \emph{undirected} graph as considered in this paper, the orientation of each edge for writing the incidence matrix can be defined arbitrarily;  the graph Laplacian matrix  $L(\mathcal{G})$ for the undirected graph is always the same regardless of what edge orientations are chosen (i.e., is orientation-independent) and the stability analysis  remains unchanged.

\subsection{Basic concepts on rigidity theory}
Let $p_i  \in \mathbb{R}^d$ where $d = \{2,3\}$ denote a point that is assigned to $i \in \mathcal{V}$.  The stacked  vector $p=[p_1^T, \,
p_2^T, \cdots, \, p_n^T]^T \in \mathbb{R}^{dn}$ represents the realization of $\mathcal{G}$  in $\mathbb{R}^d$. The pair $(\mathcal{G}, p)$ is said to be a framework of $\mathcal{G}$ in  $\mathbb{R}^d$.  By introducing the matrix $\bar H: = H\otimes I_d \in \mathbb{R}^{dm \times dn}$, one can construct the relative position vector as an image of $\bar H$ from the position vector $p$:
\begin{equation}
z = \bar H p  \label{z_equation}
\end{equation}
where $z=[z_1^T, \,
z_2^T, \cdots, \, z_m^T]^T \in \mathbb{R}^{dm}$, with $z_k  \in \mathbb{R}^d$ being the relative position vector for the vertex pair defined by the $k$-th edge.

Using the same ordering of the edge set $\mathcal{E}$ as in  the definition of $H$, the rigidity function $r_{\mathcal{G}}(p): \mathbb{R}^{dn} \rightarrow \mathbb{R}^m$ associated with the framework $(\mathcal{G}, p)$ is given as:
\begin{equation}
r_{\mathcal{G}}(p) = \frac{1}{2} \left[\cdots, \|p_i-p_j\|^2,  \cdots \right]^T, \,\,\, (i,j) \in \mathcal{E}
\end{equation}
where the    $k$-th component in $r_{\mathcal{G}}(p)$, $\|p_i-p_j\|^2$, corresponds to the squared length of the relative position vector $z_k$ which connects the vertices $i$ and $j$.

The rigidity of frameworks is then defined as follows.
\begin{definition}  (see \cite{asimow1979rigidity}) A framework $(\mathcal{G}, p)$ is rigid in $\mathbb{R}^d$ if there exists a neighborhood $\mathbb{U}$ of $p$ such that $r_{\mathcal{G}}^{-1}(r_{\mathcal{G}}(p))\cap \mathbb{U} = r_{\mathcal{K}}^{-1}(r_{\mathcal{K}}(p))\cap \mathbb{U}$ where $\mathcal{K}$ is the complete graph with the same vertices as $\mathcal{G}$.
\end{definition}
In the following,  the set of all frameworks $(\mathcal{G}, p)$ which satisfies  the distance constraints is referred to as the set of \emph{target formations}. Let $d_{k_{ij}}$ denote the desired distance in the target formation which links agents $i$ and $j$. We further define \begin{align}
e_{k_{ij}} = \|p_i - p_j\|^2 - (d_{k_{ij}})^2
\end{align}
to denote the squared distance error for edge $k$.  Note that we will also use $e_k$ and $d_k$ occasionally  for notational convenience if no confusion is expected.  Define the distance square error vector $e = [e_1, \, e_2, \, \cdots, e_m]^T$.

One useful tool to characterize the rigidity property of a framework is the rigidity matrix $R \in \mathbb{R}^{m \times dn}$, which is defined as
\begin{equation}
R(p) = \frac{\partial r_{\mathcal{G}}(p) } {\partial p}  \label{R_matrix}
\end{equation}
It is not difficult to see that each row of the rigidity matrix $R$ takes the following form
\begin{equation}  \label{row_R}
[\mathbf{0}_{1 \times d}, \cdots, (p_i-p_j)^T, \cdots,\mathbf{0}_{1 \times d}, \cdots, (p_j-p_i)^T, \cdots, \mathbf{0}_{1 \times d}]
\end{equation}
Each edge gives rise to a row of $R$, and, if an edge links vertices $i$ and $j$, then the nonzero
entries of the corresponding row of $R$ are in the columns from $di - (d-1)$ to $di$ and from $dj - (d-1)$ to $dj$.
The equation \eqref{z_equation}  shows that the relative position vector lies in the image of $\bar H$. Thus one can  redefine the rigidity function, $g_{\mathcal{G}}(z): Im(\bar H) \rightarrow \mathbb{R}^m$ as $g_{\mathcal{G}}(z) = \frac{1}{2}\left[\|z_1\|^2, \|z_2\|^2,  \cdots, \|z_m\|^2 \right]^T$. From \eqref{z_equation} and \eqref{R_matrix}, one can obtain the following simple form for the rigidity matrix
\begin{eqnarray}
R(p) = \frac{\partial r_{\mathcal{G}}(p)}{\partial p} = \frac{\partial g_{\mathcal{G}}(z)}{\partial z} \frac{\partial z} {\partial p}  = Z^T \bar H  \label{rigidity_matrix}
\end{eqnarray}
where $Z = \textrm{diag} \{z_1,\,z_2,\cdots,\,z_m\}$.

The rigidity matrix will be used to determine the infinitesimal rigidity of the framework, as shown in the following definition. \\
\begin{definition} (see \cite{hendrickson1992conditions}) A framework   $(\mathcal{G}, p)$ is infinitesimally rigid in the $d$-dimensional space if
\begin{equation}
rank(R(p)) = dn-d(d+1)/2
\end{equation}
\end{definition}
Specifically, if the framework is infinitesimally rigid in $\mathbb{R}^2$ (resp. $\mathbb{R}^3$) and has exactly $2n-3$ (resp. $3n-6$) edges, then it is called a minimally and infinitesimally rigid framework. Fig. 1 shows several examples on rigid and nonrigid formations.
In this paper we focus on the stabilization problem of minimally and infinitesimally rigid formations.  {\footnote{With some complexity of calculation, the results
extend to non-minimally rigid formations (see \cite{SMA14TACsub, sun2016exponential}).}}
From the definition of infinitesimal rigidity, one can easily prove the following lemma:
\begin{lemma} \label{L_RP}
If the framework $(\mathcal{G}, p)$ is minimally and infinitesimally rigid in the $d$-dimensional space, then the matrix $R(p)R(p)^T$ is positive definite.
\end{lemma}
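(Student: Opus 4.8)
The plan is to exploit the fact that for a minimally and infinitesimally rigid framework the rigidity matrix $R(p)$ has \emph{full row rank}, and then to invoke the elementary identity relating the null spaces of $R(p)^T$ and $R(p)R(p)^T$. First I would record the two defining numerical facts. Minimal rigidity fixes the edge count at $m = dn - d(d+1)/2$, which is precisely the number of rows of $R(p) \in \mathbb{R}^{m \times dn}$. Infinitesimal rigidity gives $rank(R(p)) = dn - d(d+1)/2$. Combining these, $rank(R(p)) = m$, so $R(p)$ has full row rank.

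Next I would observe that $R(p)R(p)^T$ is symmetric and positive semidefinite: for any $x \in \mathbb{R}^m$,
\begin{equation}
x^T R(p) R(p)^T x = \|R(p)^T x\|^2 \geq 0. \nonumber
\end{equation}
Hence it suffices to rule out a nontrivial kernel. From the same computation, $R(p)R(p)^T x = 0$ forces $x^T R(p)R(p)^T x = \|R(p)^T x\|^2 = 0$, so $R(p)^T x = 0$; the converse inclusion is immediate. This establishes $ker(R(p)R(p)^T) = ker(R(p)^T)$.

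Finally I would bound the dimension of that common kernel. Since $rank(R(p)^T) = rank(R(p)) = m$, and $R(p)^T$ has $m$ columns, we get $\dim ker(R(p)^T) = m - rank(R(p)^T) = m - m = 0$. Therefore $ker(R(p)R(p)^T) = \{0\}$, and a symmetric positive semidefinite matrix with trivial kernel is positive definite, completing the argument.

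There is no real obstacle here; the only point requiring a moment's care is the linear-algebraic step $ker(R(p)R(p)^T) = ker(R(p)^T)$, which I would justify directly via the norm identity above rather than by quoting $rank(AA^T)=rank(A)$, so that the argument is self-contained. The genuine content of the lemma is entirely upfront: that minimal plus infinitesimal rigidity together force $R(p)$ to be square-after-projection, i.e. full row rank, and everything else is routine.
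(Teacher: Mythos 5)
Your proof is correct and is precisely the standard argument the paper leaves to the reader (it only remarks that the lemma follows easily from the definition of infinitesimal rigidity): minimality gives $m = dn - d(d+1)/2$ rows, infinitesimal rigidity gives $rank(R(p)) = dn - d(d+1)/2$, so $R(p)$ has full row rank and the symmetric positive semidefinite matrix $R(p)R(p)^T$ has trivial kernel. Nothing is missing.
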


Another useful observation shows that there exists a smooth function which maps the distance set of a minimally rigid framework to the distance set of its corresponding framework modeled by a \emph{complete} graph.
\begin{lemma}\label{L_Smooth}

Let $r_{\mathcal{G}}(q)$ be the rigidity function for a given infinitesimally minimally rigid framework $(\mathcal{G}, q)$ with agents' position vector $q$. Further let $\bar r_{\bar {\mathcal{G}}}(q)$ denote the  rigidity function for an associated framework $(\bar {\mathcal{G}}, q)$, in which the vertex set remains the same as $(\mathcal{G}, q)$ but the underlying graph is a complete one (i.e. there exist $n(n-1)/2$ edges which link any  vertex pairs). Then there exists a continuously differentiable function $\pi: r_{\mathcal{G}}(q) \rightarrow  \mathbb{R}^{n(n-1)/2}$ for which $\bar r_{\bar {\mathcal{G}}}(q) = \pi(r_{\mathcal{G}}(q))$ holds locally.
\end{lemma}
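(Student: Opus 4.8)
The plan is to exploit two structural facts about a minimally and infinitesimally rigid framework: that the rigidity map $r_{\mathcal{G}}$ is a submersion at $q$, and that both $r_{\mathcal{G}}$ and $\bar r_{\bar{\mathcal{G}}}$ are invariant under the action of the Euclidean group of rigid motions (translations and rotations). First I would record the relevant dimension count. For a minimally infinitesimally rigid framework in $\mathbb{R}^d$ there are $m = dn - d(d+1)/2$ edges, and by the infinitesimal rigidity condition the rigidity matrix $R(q) = \partial r_{\mathcal{G}}/\partial p|_{q}$ has full row rank $m$ (equivalently $R(q)R(q)^T \succ 0$, as in Lemma \ref{L_RP}). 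Hence $r_{\mathcal{G}}: \mathbb{R}^{dn} \to \mathbb{R}^m$ is a submersion at $q$, and its null space $\ker R(q)$, of dimension $d(d+1)/2$, is exactly the tangent space at $q$ to the orbit $\mathcal{O}_q$ of rigid-body motions through $q$ (the infinitesimal translations and rotations).

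Next I would introduce a local slice to factor out the rigid-motion degeneracy. Choose an $m$-dimensional embedded submanifold $S \ni q$ whose tangent space $T_qS$ is complementary to $\ker R(q)$ in $\mathbb{R}^{dn}$. Restricting the rigidity map to the slice, the derivative of $r_{\mathcal{G}}|_S$ at $q$ is $R(q)$ acting on $T_qS$, which is a linear isomorphism onto $\mathbb{R}^m$. By the inverse function theorem, $r_{\mathcal{G}}|_S$ is a diffeomorphism from a neighborhood of $q$ in $S$ onto a neighborhood $\mathcal{W}$ of $r_{\mathcal{G}}(q)$; denote its smooth inverse by $\phi: \mathcal{W} \to S$. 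I would then set
\[ \pi := \bar r_{\bar{\mathcal{G}}} \circ \phi, \]
which is continuously differentiable as a composition of smooth maps and which sends the neighborhood $\mathcal{W} \subset r_{\mathcal{G}}(\mathbb{R}^{dn})$ into $\mathbb{R}^{n(n-1)/2}$.

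Finally I would verify the claimed identity $\bar r_{\bar{\mathcal{G}}}(p) = \pi(r_{\mathcal{G}}(p))$ for all $p$ in a neighborhood of $q$, not merely on $S$. Here the invariance under rigid motions is essential: both $r_{\mathcal{G}}$ and $\bar r_{\bar{\mathcal{G}}}$ depend only on the inter-agent distances and are therefore constant along each orbit. Since $S$ is transversal to the orbits, which foliate a neighborhood of $q$ (a tubular-neighborhood/slice argument, valid because the stabilizer is trivial for a framework spanning $\mathbb{R}^d$), every $p$ near $q$ can be carried by a suitable rigid motion to a point $\tilde p \in S$ with $r_{\mathcal{G}}(\tilde p) = r_{\mathcal{G}}(p)$ and $\bar r_{\bar{\mathcal{G}}}(\tilde p) = \bar r_{\bar{\mathcal{G}}}(p)$. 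Then $\pi(r_{\mathcal{G}}(p)) = \pi(r_{\mathcal{G}}(\tilde p)) = \bar r_{\bar{\mathcal{G}}}(\tilde p) = \bar r_{\bar{\mathcal{G}}}(p)$, which is exactly the desired local relation.

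The main obstacle is precisely the non-injectivity of $r_{\mathcal{G}}$: as a map out of $\mathbb{R}^{dn}$ it has a $d(d+1)/2$-dimensional kernel and cannot be inverted, so one cannot directly write $p$ as a function of $r_{\mathcal{G}}(p)$. Overcoming this requires the slice construction together with the rigid-motion invariance of \emph{both} maps to transport the local identity off the slice. The rank property of infinitesimal rigidity (Lemma \ref{L_RP}) is what makes the inverse function theorem applicable on the slice, while minimality guarantees that $\dim S = m$ matches the codomain dimension, so that $r_{\mathcal{G}}|_S$ is a genuine local diffeomorphism rather than merely an immersion or a submersion.
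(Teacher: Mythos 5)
Your proof is correct. The paper itself does not supply an argument for this lemma --- it defers to \cite{SMA14TACsub} --- and the construction there is essentially the one you give: use the full row rank of $R(q)$ guaranteed by minimal infinitesimal rigidity (Lemma \ref{L_RP}) to invert $r_{\mathcal{G}}$ on a local slice transversal to the $d(d+1)/2$-dimensional rigid-motion orbit via the inverse function theorem, compose with $\bar r_{\bar{\mathcal{G}}}$, and extend off the slice by the invariance of both rigidity functions under translations and rotations.
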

Lemma \ref{L_Smooth} indicates that all the edge distances in the framework $(\bar {\mathcal{G}}, q)$ modeled by a complete graph can be expressed \textit{locally} in terms of the edge distances of a corresponding minimally  infinitesimally rigid framework $(\mathcal{G}, q)$ via  some smooth  functions.
The proof of the
above Lemma  is omitted here and can be found in  \cite{SMA14TACsub}.
We emphasize that Lemma~\ref{L_Smooth} is important for later analysis of a distance error system (a definition of the term will be given in Section \ref{subsec:central_design}). Lemma~\ref{L_Smooth} (together with Lemma \ref{L_smooth1} given later) will enable us to obtain a self-contained distance error system so that a Lyapunov argument  can be applied for convergence analysis.

\subsection{Problem statement}

The rigid formation control problem is formulated as follows.

\begin{problem*} Consider a group of $n$ agents in $d$-dimensional space modeled by single integrators
\begin{equation}
\dot p_i  = u_i,  \,\,\,\, i = 1,2,\cdots, n  \label{single_in}
\end{equation} Design a \emph{distributed} control $u_i \in \mathbb{R}^n$ for each agent $i$ in terms of $p_i-p_j$, $j \in \mathcal{N}_i$ with \emph{event-based} control update such that $||p_i-p_j||$ converges to the desired distance $d_{k_{ij}}$  which forms a  minimally and infinitesimally rigid formation.
\end{problem*}

In this paper, we will aim to propose two feasible event-based control strategies (a centralized triggering approach   and a distributed triggering approach) to solve this formation control problem.

\begin{figure}
  \centering
  \includegraphics[width=80mm]{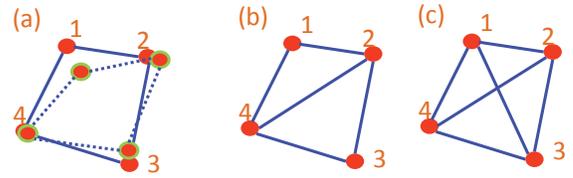}
  \caption{Examples on rigid and nonrigid formations. (a) non-rigid formation (a deformed formation with dashed lines is shown); (b) minimally rigid formation; (c) rigid but non-minimally rigid formation. }
  \label{shape_new}
  \label{fig:env}
\end{figure}

\section{Event-based controller design: Centralized case}  \label{sec:centralized}
This section focuses on the design of feasible event-based formation controllers, by assuming that a centralized processor is available for collecting the global information and broadcasting the triggering signal to all the agents such that their control inputs can be updated. The results in this section extend  the event-based formation control reported in \cite{sun2015event} by proposing an alternative approach for the event function design, which also simplifies the event-based controllers proposed in \cite{Liu2015event, CCC_CHENFEI}. Furthermore, the novel idea used for designing a   simpler event function in this section   will be useful for   designing  a feasible distributed version of an event-based formation control system, which will be reported in the next section.
\subsection{Centralized event controller design} \label{subsec:central_design}
We propose the following general form of  event-based formation control system
\begin{align}\label{position_system}
\dot p_i(t) &= u_i(t) = u_i(t_h)  \\ \nonumber
&= \sum_{j \in \mathcal{N}_i} (p_j(t_h) - p_i(t_h)) e_k(t_h)
\end{align}
for $t \in [t_h, t_{h+1})$, where  $h = 0,1,2, \cdots$ and $t_h$ is the $h$-th triggering time for updating new information in the controller. \footnote{  The differential equation \eqref{position_system} that models the event-based formation control system  involves switching controls at every event updating instant, for which  we understand its solution in the sense of Filippov \cite{cortes2008discontinuous}. } The control law is an obvious variant of the standard law for non-event-based formation shape control \cite{krick2009stabilisation, sun2016exponential}. Evidently, the control input takes  piecewise constant values in each time interval. In this section, we allow the switching times $t_h$ to be determined by a central controller. In a compact form, the above position system can be written as
\begin{equation} \label{position_system_compact}
\dot p(t) = - R(p(t_h))^T e(t_h)
\end{equation}

Denote a vector $\delta_i(t)$ as
\begin{align}
\delta_i(t) = & -\sum_{j \in \mathcal{N}_i} (p_j(t_h) - p_i(t_h)) e_k(t_h)  \nonumber \\
& + \sum_{j \in \mathcal{N}_i} (p_j(t) - p_i(t)) e_k(t)
\end{align}
 for $t \in [t_h, t_{h+1})$. Then the formation control system $\eqref{position_system}$ can be equivalently stated as
\begin{align}\label{position_system_new}
\dot p_i(t) =  u_i(t_h)  = \sum_{j \in \mathcal{N}_i} (p_j(t) - p_i(t)) e_k(t) - \delta_i(t)
\end{align}
Define a vector $\delta(t) = [\delta_1(t)^T, \delta_2(t)^T, \cdots, \delta_n(t)^T]^T \in \mathbb{R}^{dn}$. Then there holds
\begin{align}\label{eq:definition_delta}
\delta(t) =   R(t_h)^T e(t_h) - R(t)^T e(t)
\end{align}
which enables one to rewrite the compact form of the position system as
\begin{equation} \label{eq:new_position_system}
\dot p(t) = - R(t)^T e(t) -  \delta(t)
\end{equation}

To deal with the position system with the event-triggered controller \eqref{position_system}, we instead analyze   the distance error system. By noting that $\dot{e}(t)=2R(t)\dot{p}(t)$, the distance error system with the event-triggered controller \eqref{position_system} can be  derived as
\begin{align} \label{error_system}
\dot e(t) &=  2R(t) \dot p(t) \nonumber \\
&= -2R(t) R(p(t_h))^T e(t_h))\,\,\, \forall t \in [t_h, t_{h+1})
\end{align}
Note that all the entries of $R(t)$ and $e(t)$ contain  the real-time values of $p(t)$, and all the entries $R(p(t_h))$ and $e(t_h)$ contain  the piecewise-constant values $p(t_h)$ during the time interval $[t_h, t_{h+1})$.

The new form of the position system \eqref{eq:new_position_system} also implies that the compact form of the distance error system can be written as
\begin{equation} \label{error_system2}
\dot e(t) = -2R(t) (R(t)^T e(t)  +  \delta(t))
\end{equation}

Consider the function $V = \frac{1}{4}\sum_{k=1}^m e_k^2$ as a  Lyapunov-like  function candidate for \eqref{error_system2}. Similarly to the analysis in \cite{sun2015event}, we  define a sub-level  set $\mathcal{B}(\rho)= \{e: V(e) \leq \rho\}$ for some suitably small  $\rho$, such that when $e \in \mathcal{B}(\rho)$ the  formation is infinitesimally minimally rigid and $R(p(t))R(p(t))^T$ is positive definite.
Before giving the main proof, we record the following key result on the entries of the matrix $R(p(t))R(p(t))^T$.
\begin{lemma}\label{L_smooth1}
When the formation shape is close to the desired one such that the distance error $e$ is in the set $\mathcal{B}(\rho)$, the entries of the matrix $R(p(t))R(p(t))^T$ are continuously differentiable functions of $e$.
\end{lemma}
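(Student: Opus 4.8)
The plan is to compute the entries of $R(p)R(p)^T$ explicitly and then verify that each one is a $C^1$ function of the distance error vector $e$ on $\mathcal{B}(\rho)$. First I would use the row structure \eqref{row_R}: each row of $R$ corresponds to an edge, and for edge $k$ linking vertices $i$ and $j$ the only nonzero blocks are $z_k^T=(p_i-p_j)^T$ in the block of vertex $i$ and $-z_k^T$ in the block of vertex $j$. Taking inner products of rows, the $(k,l)$ entry of $R R^T$ is $2\|z_k\|^2$ when $k=l$, equals a signed inner product $\pm z_k^T z_l$ when edges $k$ and $l$ share exactly one vertex (the sign fixed by the chosen edge orientations), and vanishes when the two edges share no vertex. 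Expressing every entry through squared distances in this way also shows, as the lemma asserts, that $R R^T$ is genuinely a function of $e$ and not merely of $p$.

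The diagonal and zero entries are immediate. Since $\|z_k\|^2=\|p_i-p_j\|^2=e_k+d_k^2$ by the definition of the distance error, each diagonal entry $2\|z_k\|^2=2(e_k+d_k^2)$ is affine, hence $C^1$, in $e$, and the zero entries are trivially $C^1$. It therefore remains to treat the off-diagonal terms $z_k^T z_l$ for adjacent edges. For two edges sharing a vertex, say $k=(i,j)$ and $l=(i,a)$, I would apply the polarization identity
\begin{equation}
z_k^T z_l = (p_i-p_j)^T(p_i-p_a) = \tfrac{1}{2}\left(\|p_i-p_j\|^2 + \|p_i-p_a\|^2 - \|p_j-p_a\|^2\right). \nonumber
\end{equation}
The first two squared distances correspond to the graph edges $k$ and $l$, so they equal $e_k+d_k^2$ and $e_l+d_l^2$ and are affine in $e$.

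The difficulty, and the crux of the argument, is the third term $\|p_j-p_a\|^2$: the pair $(j,a)$ need not be an edge of the minimally rigid graph $\mathcal{G}$, so this distance is not directly one of the coordinates of $e$. This is exactly where Lemma \ref{L_Smooth} is needed. Since $(j,a)$ is always an edge of the complete graph $\bar{\mathcal{G}}$ on the same vertices, Lemma \ref{L_Smooth} provides a continuously differentiable map $\pi$ with $\bar r_{\bar{\mathcal{G}}}=\pi(r_{\mathcal{G}})$ holding locally; in particular $\|p_j-p_a\|^2$ is a $C^1$ function of the edge distances $\{\|p_i-p_j\|^2\}_{(i,j)\in\mathcal{E}}$, each of which is affine in $e$. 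Choosing $\rho$ small enough that $\mathcal{B}(\rho)$ lies inside the neighborhood on which $\pi$ is defined, every off-diagonal entry becomes a composition of $C^1$ maps and is thus $C^1$ in $e$. Combining the three cases yields that all entries of $R(p)R(p)^T$ are continuously differentiable functions of $e$ on $\mathcal{B}(\rho)$, proving the lemma. The main obstacle is precisely the non-edge distance $\|p_j-p_a\|^2$, and its resolution is the local smooth dependence guaranteed by Lemma \ref{L_Smooth}.
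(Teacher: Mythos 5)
Your proof is correct and is essentially the standard argument used in the references the paper cites for this lemma (the paper itself omits the proof and points to \cite{SMA14TACsub} and \cite{sun2014finite}): compute the entries of $RR^T$ from the row structure, reduce the off-diagonal inner products to squared distances via polarization, and invoke Lemma \ref{L_Smooth} to handle the non-edge distance $\|p_j-p_a\|^2$. You correctly identify that last step as the crux, so nothing is missing.
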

This lemma enables one to discuss the \emph{self-contained} distance error system \eqref{error_system2} and thus a Lyapunov argument can be applied to show the convergence of the  distance errors. The proof of Lemma \ref{L_smooth1} can be found in \cite{SMA14TACsub} or \cite{sun2014finite} and will not be presented here.  From Lemma \ref{L_smooth1},  one can show that
\begin{align} \label{derivative_lya2}
\dot V(t) &=    \frac{1}{2} e(t)^T  \dot e(t) =  -e(t)^T R(t) (R(t)^T e(t)  +  \delta(t))  \nonumber \\
&= -e(t)^T R(t)R(t)^T e(t) - e(t)^T R(t)  \delta(t) \nonumber \\
& \leq  -\|R(t)^T e(t)\|^2 + \|e(t)^T R(t)\|  \| \delta(t) \|
\end{align}

If we enforce the norm of $\delta(t)$   to satisfy
\begin{equation} \label{delta_norm}
\|\delta(t)\| \leq \gamma \|R(t)^T e(t)\|
\end{equation}
and choose the parameter  $\gamma$ to satisfy  $0< \gamma <1$,  then we can guarantee that
\begin{equation} \label{eq:Derivative_Lya}
\dot V(t) \leq (\gamma -1) \|R(t)^T e(t)\|^2 <0
\end{equation}
 This indicates that events are triggered when
\begin{equation} \label{trigger_function}
f  := \|\delta(t)\| -   \gamma \|R(t)^T e(t)\| = 0
\end{equation}
The event time $t_h$ is  defined to satisfy $f(t_h) = 0$ for $h = 0,1,2, \cdots$. For the time interval $t \in [t_h, t_{h+1})$, the control input is chosen as $u(t) = u(t_h)$  until the next event is triggered. Furthermore, every time when an event is triggered, the event vector $\delta$  will be reset to zero.

We also show two key properties of the formation control system \eqref{position_system} with the above event function \eqref{trigger_function}.
\begin{lemma} \label{lemma_fixed_centroid}
The formation centroid remains constant under the control of \eqref{position_system} with the event function   \eqref{trigger_function}.
\end{lemma}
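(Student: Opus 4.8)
The plan is to show that the time derivative of the formation centroid vanishes on every inter-event interval, and then to extend this to all $t$ using continuity of $p$. I would first define the centroid $\bar p(t) = \frac{1}{n}(\mathbf{1}_n^T \otimes I_d) p(t)$ and differentiate it. On each interval $[t_h, t_{h+1})$ the compact dynamics \eqref{position_system_compact} yield
\[
\dot{\bar p}(t) = \frac{1}{n}(\mathbf{1}_n^T \otimes I_d)\dot p(t) = -\frac{1}{n}(\mathbf{1}_n^T \otimes I_d) R(p(t_h))^T e(t_h),
\]
so it suffices to establish that $(\mathbf{1}_n^T \otimes I_d) R(p(t_h))^T = 0$, independently of the value of $e(t_h)$.

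The key algebraic step uses the factorization $R(p)^T = \bar H^T Z = (H^T \otimes I_d) Z$ obtained from \eqref{rigidity_matrix}, with $Z = \textrm{diag}\{z_1, \ldots, z_m\}$. By the mixed-product property of the Kronecker product,
\[
(\mathbf{1}_n^T \otimes I_d) R(p(t_h))^T = \big((\mathbf{1}_n^T H^T) \otimes I_d\big) Z(t_h).
\]
Since every row of the incidence matrix $H$ contains exactly one entry equal to $+1$ and one equal to $-1$, we have $H\mathbf{1}_n = \mathbf{0}$, hence $\mathbf{1}_n^T H^T = \mathbf{0}$, and the right-hand side vanishes. This gives $\dot{\bar p}(t) = 0$ throughout each interval. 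As a combinatorial sanity check, summing \eqref{position_system} over $i$ shows that each edge $(i,j)$ contributes $(p_j(t_h)-p_i(t_h))e_k(t_h)$ to $\dot p_i$ and exactly its negative to $\dot p_j$, so the edge contributions cancel pairwise.

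Finally, because $p$ is the integral of a bounded, piecewise-constant control input, it is continuous at every triggering instant $t_h$, and therefore so is $\bar p$. A continuous function whose derivative is zero on each subinterval of the partition $\{t_h\}$ of the time axis must be constant; hence $\bar p(t) \equiv \bar p(0)$ for all $t \geq 0$. I expect no serious obstacle: the conclusion in fact holds for any choice of triggering times, so the specific event function \eqref{trigger_function} plays no role, and the only point needing care is the validity of the argument across the jump discontinuities of $\dot p$, which is secured by the continuity of $p$.
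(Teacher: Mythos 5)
Your proof is correct and follows essentially the same route as the paper: both reduce $\dot{\bar p}$ to $-\frac{1}{n}(\mathbf{1}_n^T\otimes I_d)R(p(t_h))^T e(t_h)$ and kill it using $R^T=\bar H^T Z$ together with $H\mathbf{1}_n=0$ (equivalently, $\ker(\bar H)=\mathrm{span}\{\mathbf{1}_n\otimes I_d\}$). The added remarks on continuity across triggering instants and the pairwise edge cancellation are fine but not a substantive departure.
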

\begin{proof}
Denote by $\bar p(t) \in \mathbb{R}^d$ the center of the mass of the formation, i.e., $\bar p(t) = \frac{1}{n} \sum_{i=1}^n p_i(t) = \frac{1}{n}(\mathbf{1}_n \otimes I_d)^T p(t)$.
One can show
\begin{align} \label{eq:formation_centroid_fixed}
\dot {\bar p}(t) &=   \frac{1}{n}(\mathbf{1}_n \otimes I_d)^T \dot p(t)  \nonumber \\
& = -\frac{1}{n}(\mathbf{1}_n \otimes I_d)^T R(p(t_h))^T e(t_h)  \nonumber \\
&=  -\frac{1}{n}\left( Z(t_h)^T \bar H (\mathbf{1}_n \otimes I_d) \right)^T e(t_h)
\end{align}
Note that  $ker(H) = span\{\mathbf{1}_n\}$ and therefore $ker(\bar H) = span\{\mathbf{1}_n \otimes I_d\}$. Thus $\dot {\bar p}(t) = 0$, which indicates that the formation centroid remains constant.
\end{proof}
The following lemma concerns the coordinate frame requirement and enables each agent to use its local coordinate frame to implement the control law, which is favourable for networked formation control systems in e.g. GPS denied environments.
\begin{lemma} \label{lemma:local_coordinate}
To implement the controller \eqref{position_system} with the event-based control update condition in \eqref{trigger_function}, each agent can use its own local coordinate frame which does not need to be aligned with a global coordinate frame.
\end{lemma}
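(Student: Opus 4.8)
The claim is that the controller is invariant under independent choices of local coordinate frames by the agents, meaning each agent may express its neighbors' relative positions in its own frame without needing a common global reference. Let me sketch how I would establish this.

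The plan is to exploit the fact that the control law depends on the agents' states only through the relative position vectors $p_j - p_i$ and the scalar distance errors $e_k$, both of which transform covariantly under rigid-body motions of the coordinate frame. First I would formalize what a ``local coordinate frame'' means: agent $i$ perceives the position of a neighbor $j$ as $p_j^{(i)} = Q_i(p_j - p_i) + c_i$ for some rotation (orthogonal) matrix $Q_i \in SO(d)$ and translation $c_i$, relative to its own origin. The key observation is that the relative position vector $p_j - p_i$ that agent $i$ actually needs for its control input \eqref{position_system} is measured within agent $i$'s own frame, so it becomes $Q_i(p_j - p_i)$; the translation $c_i$ cancels because only differences of positions enter the controller.

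Next I would verify that the distance error $e_{k_{ij}} = \|p_i - p_j\|^2 - (d_{k_{ij}})^2$ is frame-independent. Since $Q_i$ is orthogonal, $\|Q_i(p_j - p_i)\|^2 = \|p_j - p_i\|^2$, so each agent computes exactly the same scalar $e_k$ regardless of its frame orientation. Hence in agent $i$'s frame the control input reads $u_i = \sum_{j \in \mathcal{N}_i} Q_i(p_j(t_h) - p_i(t_h)) e_k(t_h) = Q_i u_i^{\text{global}}$, i.e. agent $i$'s computed velocity is simply its true velocity expressed in its own frame. The crucial point is that when agent $i$ integrates $\dot p_i = u_i$ within its local frame and we transform back to the global frame, the dynamics are consistent: the actual motion of agent $i$ in the global frame is $Q_i^T u_i = Q_i^T Q_i u_i^{\text{global}} = u_i^{\text{global}}$, recovering the original closed-loop system. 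Thus the trajectory of the true formation is unchanged.

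The step I expect to require the most care is justifying why the triggering condition \eqref{trigger_function} is also frame-independent, since the lemma asserts the whole control update scheme works locally. For the centralized scheme this needs a brief argument that the quantities $\|\delta(t)\|$ and $\|R(t)^T e(t)\|$ entering $f$ are invariant under the block-diagonal orthogonal transformation $\text{diag}\{Q_1, \ldots, Q_n\}$ acting on the stacked state, because each block of $R^T e$ is a sum of terms $Q_i(p_j - p_i)e_k$ whose norm is preserved, and $\delta$ is a difference of such quantities. Strictly, the centralized trigger still presumes the central processor can compute these norms; the honest reading of this lemma is that the \emph{local control computation} at each agent requires only locally-measured relative positions and frame-invariant distance errors, so no frame alignment among agents is needed to execute the feedback law itself. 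I would therefore state the result as an invariance property: the closed-loop formation dynamics and the scalar norm quantities governing \eqref{trigger_function} are unchanged under arbitrary per-agent orthogonal frame rotations, which is exactly what makes local implementation feasible.
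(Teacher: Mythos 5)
Your proposal is correct and follows essentially the same route the paper takes: the paper omits the proof of this lemma (deferring to \cite[Lemma 4]{sun2014finite}), but the argument it spells out for the distributed analogue later in Section \ref{sec:distributed} is exactly yours --- relative positions transform as $z_{k_{ij}}^i = \mathcal{Q}_i z_{k_{ij}}^g$ under a per-agent rotation and translation, so the distance errors and the norms $\|\delta\|$ and $\|R^T e\|$ entering the trigger are invariant. Your added caveat that the centralized trigger still presumes a central processor evaluating these (frame-invariant) norms is a fair and accurate reading of the lemma.
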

The proof for the above lemma is omitted here, as it follows similar steps as in \cite[Lemma 4]{sun2014finite}. Note that Lemma  \ref{lemma:local_coordinate} implies the event-based formation system \eqref{position_system} guarantees the $SE(N)$ invariance of the controller, which is a nice property to enable convenient implementation  for networked control systems  without coordinate alignment for each individual agent \cite{SE_N_INVARIANCE}.

We now arrive at the following main result of this section.
\begin{theorem} \label{theorem_main_1}
Suppose the target formation is infinitesimally and minimally rigid and the initial formation shape is close to the target one. By using the above controller \eqref{position_system} and the event-triggering function \eqref{trigger_function}, all the agents will reach the desired formation shape locally exponentially fast.
\end{theorem}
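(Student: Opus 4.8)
The plan is to take the Lyapunov derivative bound \eqref{eq:Derivative_Lya} that the triggering rule \eqref{trigger_function} already enforces and convert it into an exponential decay estimate for $V$. First I would note that, by construction of the event function, on every interval $[t_h,t_{h+1})$ the vector $\delta$ starts from zero (it is reset at each event) and grows until $\|\delta(t)\| = \gamma\|R(t)^T e(t)\|$, so the inequality \eqref{delta_norm} holds throughout the evolution; consequently \eqref{eq:Derivative_Lya} gives $\dot V(t) \leq (\gamma-1)\|R(t)^T e(t)\|^2$ for all $t$. Because $0<\gamma<1$ we have $\dot V < 0$ whenever $e \neq 0$, so the sublevel set $\mathcal{B}(\rho)$ is positively invariant. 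Taking the initial formation close enough to the target that $e(0)\in\mathcal{B}(\rho)$ (this is where the \emph{local} nature of the result enters), the whole trajectory stays in $\mathcal{B}(\rho)$, where infinitesimal rigidity and the positive definiteness of $R(p)R(p)^T$ are guaranteed.

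The key step is to lower-bound $\|R(t)^T e(t)\|^2$ by a constant multiple of $V$. Writing $\|R^T e\|^2 = e^T (RR^T) e$ and invoking Lemma \ref{L_RP}, the matrix $RR^T$ is positive definite on $\mathcal{B}(\rho)$. By Lemma \ref{L_smooth1}, on this set the entries of $RR^T$ are continuously differentiable functions of $e$, so $\lambda_{\min}(RR^T)$ is a continuous and strictly positive function of $e$ on the compact set $\mathcal{B}(\rho)=\{e:\|e\|^2\leq 4\rho\}$. It therefore attains a positive minimum $\lambda_0>0$ there. This yields $\|R^T e\|^2 \geq \lambda_0\|e\|^2 = 4\lambda_0 V$, and substituting into the derivative bound produces $\dot V(t) \leq -4(1-\gamma)\lambda_0\, V(t)$.

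From here exponential convergence follows by the comparison lemma. Setting $c:=4(1-\gamma)\lambda_0>0$ we obtain $V(t)\leq V(0)e^{-ct}$, and since $V=\tfrac14\|e\|^2$ this gives $\|e(t)\|\leq \|e(0)\|e^{-ct/2}$. As the distance errors decay exponentially to zero, each $\|p_i-p_j\|$ converges to its prescribed value $d_{k_{ij}}$, so the formation reaches the desired shape locally exponentially fast, completing the argument.

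The main obstacle I anticipate is establishing the \emph{uniform} positive lower bound $\lambda_0$ on $\lambda_{\min}(RR^T)$. The subtlety is that $R$ depends on the actual positions $p$ rather than directly on $e$, so a priori one cannot treat $RR^T$ as a function on the compact $e$-ball $\mathcal{B}(\rho)$; the compactness-plus-continuity argument closes only because Lemma \ref{L_smooth1} supplies the \emph{self-contained} representation of $RR^T$ in terms of $e$, and because $\rho$ is chosen small enough that infinitesimal rigidity—and hence the positive definiteness furnished by Lemma \ref{L_RP}—persists everywhere on $\mathcal{B}(\rho)$. Once this bound is in hand the remainder is routine.
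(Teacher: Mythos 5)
Your proposal is correct and follows essentially the same route as the paper's own proof: enforce $\|\delta\|\leq\gamma\|R^Te\|$ via the trigger, bound $\dot V\leq(\gamma-1)\|R^Te\|^2$, lower-bound $\|R^Te\|^2$ by $\bar\lambda_{\min}\|e\|^2$ using compactness of $\mathcal{B}(\rho)$ together with Lemmas \ref{L_RP} and \ref{L_smooth1}, and conclude by the comparison lemma with the same decay rate $2(1-\gamma)\bar\lambda_{\min}$. Your explicit remarks on the invariance of $\mathcal{B}(\rho)$ and on why $RR^T$ can be treated as a function of $e$ are useful elaborations of steps the paper leaves implicit, but they do not change the argument.
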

\begin{proof}
The  above analysis relating to Eq. \eqref{derivative_lya2}-\eqref{trigger_function}  establishes boundedness of $e(t)$ since $\dot V$ is nonpositive. Now we show the exponential convergence of $e(t)$ to zero will occur from a ball around the origin, which is equivalent to the desired formation shape being reached
exponentially fast. According to Lemma  \ref{L_RP}, let $\bar \lambda_{\text{min}}$ denote the smallest eigenvalue of $M(e) :=R(p)R(p)^T$ when $e(p)$ is in the set $\mathcal{B} (\rho) $ (i.e. $\bar \lambda_{\text{min}}  =  \mathop {\min }\limits_{e \in \mathcal{B}(\rho)} \lambda  (M(e)) >0$).  Note that $\bar \lambda_{\text{min}}$ exists  because  the set $\mathcal{B} (\rho)$ is a compact set with respect to $e$ and   the eigenvalues of a matrix are continuous functions of the matrix elements. By recalling \eqref{eq:Derivative_Lya}, there further holds
\begin{align}
\dot V(t) \leq (\gamma - 1)  \bar \lambda_{\text{min}} \| e(t)\|^2
\end{align}
Thus one concludes
\begin{align} \label{eq:exponential_e}
\|e(t)\| \leq \text{exp}({- \kappa t}) \|e(0)\|
\end{align}
  with the exponential decaying rate  no less than $\kappa = 2(1- \gamma)  \bar \lambda_{\text{min}}$.
\end{proof}

Note that the convergence of the inter-agent distance error of itself does not directly guarantee the convergence of   agents' positions $p(t)$ to some fixed points, even though it does guarantee convergence to a correct formation shape. This is because that the desired equilibrium corresponding to the correct rigid shape is not a single point, but is a set of equilibrium points induced by rotational and translation invariance (for a detailed discussion to this subtle point, see \cite[Chapter 5]{dorfler_thesis}).
 A sufficient condition for this strong convergence to a stationary formation is guaranteed by the   exponential convergence, which was proved above.  To sum up, one has the following lemma on the convergence of the position system \eqref{position_system_compact} as a consequence of Theorem \ref{theorem_main_1}.
\begin{lemma} \label{lemma:fixed_final_position}
The event-triggered control law \eqref{position_system} and the event function \eqref{trigger_function} guarantee  the convergence of $p(t)$ to a fixed point.
\end{lemma}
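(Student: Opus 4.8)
The plan is to prove that $p(t)$ is a Cauchy trajectory by showing that its velocity $\dot p(t)$ is absolutely integrable on $[0,\infty)$. Once $\int_0^\infty \|\dot p(s)\|\,ds < \infty$ is established, the identity $p(t) = p(0) + \int_0^t \dot p(s)\,ds$ forces $p(t)$ to converge to some fixed point $p^*$ as $t \to \infty$, which is exactly the claim.

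First, I would start from the equivalent form \eqref{eq:new_position_system} of the position system, $\dot p(t) = -R(t)^T e(t) - \delta(t)$, and exploit the triggering condition \eqref{delta_norm}, which holds for all $t$ by the very design of the event rule. This immediately yields the pointwise bound
\[
\|\dot p(t)\| \leq \|R(t)^T e(t)\| + \|\delta(t)\| \leq (1+\gamma)\,\|R(t)^T e(t)\| \leq (1+\gamma)\,\|R(t)\|\,\|e(t)\|.
\]
Next I would argue that $\|R(t)\|$ admits a uniform bound $c>0$. Because $V$ is nonincreasing along the trajectory (see the analysis preceding Theorem \ref{theorem_main_1}), the error stays in $\mathcal{B}(\rho)$, so each squared distance $\|p_i(t)-p_j(t)\|^2 = e_k(t)+d_k^2$ remains bounded; hence the relative position vectors $z_k$ are bounded, and since $R = Z^T \bar H$ by \eqref{rigidity_matrix}, the matrix $R(t)$ is uniformly bounded. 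The key observation here is that $R$ depends only on relative positions, so this bound persists even though the absolute positions may still be moving.

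Finally, I would combine these estimates with the exponential decay \eqref{eq:exponential_e} to obtain
\[
\|\dot p(t)\| \leq c\,(1+\gamma)\,\exp(-\kappa t)\,\|e(0)\|,
\]
whose integral over $[0,\infty)$ is finite, with $\int_0^\infty \|\dot p(s)\|\,ds \leq c(1+\gamma)\|e(0)\|/\kappa$. Consequently, for any $t_2 > t_1$ we have $\|p(t_2)-p(t_1)\| \leq \int_{t_1}^{t_2} \|\dot p(s)\|\,ds \to 0$ as $t_1 \to \infty$, so $p(t)$ is Cauchy and converges to a fixed point. The main obstacle is the uniform boundedness of $\|R(t)\|$: one must verify that it is the relative positions, not the absolute positions, that enter $R$, and that invariance of $\mathcal{B}(\rho)$ keeps these relative positions bounded. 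A minor additional point is to confirm that the piecewise-constant, Filippov nature of $\dot p$ does not disturb the integral estimate, which is automatic since the derived bound is an integrable envelope valid on every inter-event interval.
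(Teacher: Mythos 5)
Your proposal is correct and follows essentially the same route as the paper, which states the lemma as a direct consequence of the exponential convergence in Theorem \ref{theorem_main_1}: your integrability argument ($\|\dot p\| \leq (1+\gamma)\|R\|\,\|e\|$ decaying exponentially, hence $p(t)$ is Cauchy) is precisely the standard way of making that implication rigorous, and your attention to the uniform bound on $\|R(t)\|$ via invariance of $\mathcal{B}(\rho)$ correctly fills in the one detail the paper leaves implicit.
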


\begin{remark}
We remark that the above Theorem 1 (as well as the subsequent results in later sections)  concerns a \emph{local} convergence. This is because   the rigid formation shape control system is \emph{nonlinear} and typically exhibits \emph{multiple} equilibria, which include the ones corresponding to correct formation shapes and those that do not correspond to   correct shapes.  It has been shown in \cite{anderson2014counting} by using the tool of Morse theory that multiple equilibria, including incorrect equilibria, are a consequence of any formation shape control algorithm which evolves in a steepest descent direction of a smooth cost function that is invariant under translations and rotations. A recent paper \cite{zhiyong_CDC15} proves the instability of a set of degenerate equilibria that live in a lower dimensional space. However, the stability property for more general equilibrium points is still unknown.  It is in fact considered as a very challenging open problem to obtain an almost global convergence result for general rigid formations, except for some special formation shapes such as 2-D triangular formation shape, or 2-D rectangular shape, or 3-D tetrahedral shape (see the review in \cite{oh2012survey, park2018distance}). We note that local convergence is still valuable in practice, if one assumes that initial shapes are close to the target ones (which is a very common assumption in most rigidity-based formation control works; see e.g.,  \cite{oh2012survey, krick2009stabilisation, dorfler2010geometric, ramazani2017rigidity, sun2014finite, sun2016exponential, Raik_SIAM}).
\end{remark}

\subsection{Exclusion of Zeno behavior}
In this section, we will analyze the exclusion of possible Zeno triggering behavior of the   event-based formation control  system \eqref{position_system}.
The following presents a formal definition of Zeno triggering (which is also termed \emph{Zeno execution} in the hybrid system study  \cite{ames2006stability}).
\begin{definition} \label{def:zeno-triggering}
For agent $i$, a triggering  is \emph{Zeno}  if
\begin{align}
\lim_{h \rightarrow \infty} t_h^i = \sum_{h=0}^{\infty} (t_{h+1}^i - t_h^i) = t_{\infty}^i
\end{align}
for some finite $t_{\infty}^i$ (termed the \emph{Zeno time}).
\end{definition}

Generally speaking, Zeno-triggering of an event controller means that it  triggers an infinite number of events  in a finite time period, which is an undesirable triggering behavior.
Therefore, it is important to exclude the possibility of Zeno behavior in an event-based system.
In the following we will   show that the event-triggered   system \eqref{position_system} does not exhibit Zeno behavior.

Note that the triggering function \eqref{trigger_function} involves the evolution of the term $R(t)^T e(t)$, whose derivative is calculated as

\begin{align}
\frac{\text{d} (R(t)^T e(t))}{\text{d}t} = & \dot R(t)^T e(t) + R(t)^T \dot e(t) \nonumber \\
= & \bar H^T  \dot Z(t)  e(t)   \nonumber \\
& -2 R(t)^T R(t) (R(t)^T e(t) +\delta(t))  \nonumber \\
\end{align}
According to the construction of the vector $\delta(t)$ in \eqref{eq:definition_delta}, there also holds $\dot \delta(t) = - \frac{\text{d} (R(t)^T e(t))}{\text{d}t}$.

Before presenting the proof, we first show a useful bound.
\begin{lemma} \label{lemma:bound_RTe}
The following bound  holds:
\begin{align} \label{eq:bound_RTe}
\|\bar H^T  \dot Z(t)  e(t)\|   &\leq \sqrt{d} \| \bar H^T \| \|\bar H \| \|e(t)\| \|\dot p(t)\| \nonumber \\
&\leq \sqrt{d} \| \bar H^T \| \|\bar H \| \|e(0)\| \|R(t)^T e(t) +\delta(t)\| \nonumber \\
\end{align}
\end{lemma}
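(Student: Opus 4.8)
The plan is to bound $\|\bar H^T \dot Z(t) e(t)\|$ by peeling off the matrix norms submultiplicatively and then controlling the remaining factor $\|\dot Z(t)\|$ in terms of $\|\dot p(t)\|$. The first inequality in the claim is the structural one: since $\bar H^T$ is a fixed matrix, I would write $\|\bar H^T \dot Z(t) e(t)\| \leq \|\bar H^T\| \, \|\dot Z(t)\| \, \|e(t)\|$, so the whole task reduces to showing $\|\dot Z(t)\| \leq \sqrt{d}\,\|\bar H\|\,\|\dot p(t)\|$. The second inequality is then immediate from two facts already established earlier: by \eqref{eq:exponential_e} (indeed just by $\dot V \leq 0$, giving boundedness) we have $\|e(t)\| \leq \|e(0)\|$, and by the compact position dynamics \eqref{eq:new_position_system} we have $\dot p(t) = -(R(t)^T e(t) + \delta(t))$, so $\|\dot p(t)\| = \|R(t)^T e(t) + \delta(t)\|$. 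Substituting these two replacements into the first line yields the second line directly.

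\textbf{The core step} is therefore the estimate $\|\dot Z(t)\| \leq \sqrt{d}\,\|\bar H\|\,\|\dot p(t)\|$. Recall $Z(t) = \mathrm{diag}\{z_1, z_2, \ldots, z_m\}$ is the block-diagonal matrix built from the relative position vectors $z_k \in \mathbb{R}^d$, and $z = \bar H p$ from \eqref{z_equation}, so $\dot z = \bar H \dot p$ and each $\dot z_k$ is the corresponding block of $\bar H \dot p$. Since $Z$ is block diagonal with blocks $z_k$, the matrix $\dot Z$ has the same block-diagonal sparsity pattern, and its induced 2-norm is governed by the blocks $\dot z_k$. The plan is to bound $\|\dot Z(t)\|$ by $\max_k \|\dot z_k\|$ or by the Frobenius-type aggregate $\sqrt{\sum_k \|\dot z_k\|^2}$; the factor $\sqrt{d}$ is what accounts for the $d$-dimensional blocks when passing between the induced norm of the stacked object $\dot Z$ and the ordinary vector norm $\|\dot z\| = \|\bar H \dot p\| \leq \|\bar H\|\,\|\dot p\|$. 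Assembling $\|\dot Z(t)\| \leq \sqrt{d}\,\|\dot z(t)\| \leq \sqrt{d}\,\|\bar H\|\,\|\dot p(t)\|$ closes the chain.

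\textbf{The main obstacle} I anticipate is the careful bookkeeping of the $\sqrt{d}$ factor: one must make precise how the induced 2-norm of the diagonal matrix $\dot Z$ (whose ``diagonal'' entries are themselves $d$-vectors sitting in $m \times dm$ structure) relates to the Euclidean norm of the stacked vector $\dot z \in \mathbb{R}^{dm}$. Concretely, $\dot Z$ acting on a vector reproduces inner products of the $\dot z_k$ against coordinate blocks, and a worst-case alignment over the $d$ coordinates within a block is what produces $\sqrt{d}$ rather than a clean factor of $1$. Once the norm relationship $\|\dot Z(t)\| \leq \sqrt{d}\,\|\dot z(t)\|$ is justified, the remainder is routine submultiplicativity together with the two substitutions noted above, and no Lyapunov machinery beyond the already-proven boundedness $\|e(t)\| \leq \|e(0)\|$ is needed.
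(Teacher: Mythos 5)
Your proposal is correct, but it factors the product differently from the paper. The paper's key step is to rewrite $\dot Z(t)e(t)$ as $(E(t)\otimes I_d)\,\dot z(t)$ with $E(t)=\mathrm{diag}\{e_1(t),\dots,e_m(t)\}$, transferring the diagonal structure onto the scalar distance errors and then bounding $\|E(t)\otimes I_d\|$ by its Frobenius norm; that is where its $\sqrt d$ comes from, via $\|E\otimes I_d\|_F=\sqrt d\,\|E\|_F=\sqrt d\,\|e\|$. You instead keep $\dot Z(t)$ intact and bound its induced $2$-norm directly, and this route works --- in fact better than you expect. Since $\dot Z v=[v_1\dot z_1^T,\dots,v_m\dot z_m^T]^T$ for $v\in\mathbb R^m$, one has $\|\dot Z v\|^2=\sum_k v_k^2\|\dot z_k\|^2\le\bigl(\max_k\|\dot z_k\|^2\bigr)\|v\|^2$, hence $\|\dot Z\|=\max_k\|\dot z_k\|\le\|\dot z\|\le\|\bar H\|\,\|\dot p\|$ with no $\sqrt d$ at all; your anticipated ``worst-case alignment over the $d$ coordinates within a block'' does not occur because the diagonal blocks are column vectors, not square blocks. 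Since $\sqrt d\ge 1$, the stated inequality follows a fortiori. The remaining substitutions ($\|e(t)\|\le\|e(0)\|$ from the Lyapunov decrease and $\dot p(t)=-(R(t)^Te(t)+\delta(t))$ from \eqref{eq:new_position_system}) coincide with the paper's. Net comparison: your decomposition is more direct and yields a strictly tighter constant, while the paper's Kronecker rearrangement is what introduces the (unnecessary but harmless) factor $\sqrt d$.
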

\begin{proof}
We first show a trick to bound the term $\|  \dot Z(t)  e(t)\|$ by deriving an alternative expression for $\dot Z(t)  e(t)$:
\begin{small}
\begin{align}
\dot Z(t)  e(t) &= \text{diag}\{\dot z_1(t),   \dot z_2(t), \cdots,  \dot z_m(t)\} e(t) \nonumber \\
&=   \left[
\begin{array}{c}
e_1(t) \dot z_1(t) \\
e_2(t) \dot z_2(t) \\
\vdots \\
e_m(t) \dot z_m(t) \\
\end{array}
\right] \nonumber \\
& = \left( \left[
\begin{array}{cccc}
e_1(t)  &  0 &  \cdots &  0 \\
0 & e_2(t) & \cdots &  0    \\
\vdots         &   \vdots           &  \ddots &  \vdots    \\
0 & 0 &  \cdots & e_m(t)
\end{array}
\right] \otimes I_d \right)
 \left[
\begin{array}{c}
\dot z_1(t) \\
\dot z_2(t) \\
\vdots \\
\dot z_m(t) \\
\end{array}
\right] \nonumber \\
&  = : \left(E(t) \otimes I_d \right) \dot z(t) \nonumber \\
\end{align}
\end{small}
where $E(t)$ is defined as a diagonal matrix in the form $E(t) = \text{diag}\{e_1(t), e_2(t), \cdots, e_m(t)\}$.

Note that $z(t) = \bar H  p(t)$ and thus $\dot z(t) = \bar H \dot p(t)$. Then one has
\begin{align}
\| \bar H^T \dot Z(t)  e(t)\| &= \| \bar H^T \left(E(t) \otimes I_d \right) \dot z(t)\| \nonumber \\
&\leq \| \bar H^T \| \|\left(E(t) \otimes I_d \right)\| \|\bar H \dot p(t)\| \nonumber \\
&\leq \| \bar H^T \| \|\bar H \| \|\left(E(t) \otimes I_d \right)\|_F \|\dot p(t)\| \nonumber \\
&\leq \sqrt{d} \| \bar H^T \| \|\bar H \| \|e(t)\| \|R(t)^T e(t) +\delta(t)\|
\end{align}
where we have used the following facts
\begin{align}
&\|E(t)  \| \leq \|E(t)  \|_F \nonumber \\
& \|\left(E(t) \otimes I_d \right)\|_F = \sqrt{d}\|E(t)  \|_F \nonumber \\
& \|E(t)\|_F = \|e(t)\| \nonumber
 \end{align}
The first inequality in \eqref{eq:bound_RTe} is thus proved.
  The second inequality in \eqref{eq:bound_RTe} is due to the fact that $\|e(t)\| \leq \|e(0)\|, \forall t>0$ shown in \eqref{eq:exponential_e}.
\end{proof}

We now show that the Zeno triggering does not occur in the formation control system \eqref{position_system} with the triggering function  \eqref{trigger_function} by proving a positive lower bound on the inter-event time interval.
\begin{theorem}
The inter-event time interval $\{t_{h+1} - t_{h}\}$ is lower bounded by a positive value $\tau$
\begin{equation}
\tau =   \frac{\gamma}{\alpha(1+\gamma)} >0
\end{equation}
where

\begin{align} \label{eq:alpha}
\alpha &= \sqrt{d} \| \bar H^T \| \|\bar H \| \|e(0)\| +  \sqrt{2} \bar \lambda_{\text{max}}(R^TR(e))   >0 \nonumber \\
\end{align}
in which $\bar \lambda_{\text{max}}$ denotes the largest eigenvalue of $R^TR(e)$ when $e(p)$ is in the set $\mathcal{B} (\rho)$ (i.e. $\bar \lambda_{\text{max}}  =  \mathop {\text{max} }\limits_{e \in \mathcal{B} (\rho)} \lambda  (R^TR(e)) >0$), and
 $\gamma$ is the triggering parameter designed in \eqref{trigger_function} which satisfies  $\gamma \in (0,1)$.  Thus, Zeno triggering will not occur for the rigid formation control system  \eqref{position_system} with   the triggering function  \eqref{trigger_function}. 
\end{theorem}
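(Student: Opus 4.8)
The plan is to rule out Zeno behavior by exhibiting a uniform positive lower bound on every inter-event interval $t_{h+1}-t_h$, using the ratio-comparison argument that is standard for event-triggered designs. The crucial structural fact is that the event vector $\delta$ is reset to zero at each event instant, so $\delta(t_h)=0$, whereas the next event fires exactly when the triggering function \eqref{trigger_function} vanishes, i.e. when $\|\delta(t)\| = \gamma\|R(t)^T e(t)\|$. I would therefore monitor the scalar ratio $\phi(t):=\|\delta(t)\|/\|R(t)^T e(t)\|$, which satisfies $\phi(t_h)=0$ and cannot trigger until it has grown to $\phi=\gamma$; a lower bound on the time required for this growth is precisely the sought $\tau$.

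First I would bound the growth rate of the event vector. From $\dot\delta(t)=-\tfrac{\mathrm{d}}{\mathrm{d}t}(R(t)^T e(t))$ and the derivative expression recorded above, namely $\dot\delta(t) = -\bar H^T\dot Z(t)e(t) + 2R(t)^T R(t)(R(t)^T e(t)+\delta(t))$, together with $\dot p(t) = -(R(t)^T e(t)+\delta(t))$ from \eqref{eq:new_position_system}, the triangle inequality gives
\[ \|\dot\delta(t)\| \le \|\bar H^T\dot Z(t)e(t)\| + 2\|R(t)^T R(t)\|\,\|\dot p(t)\|. \]
Applying Lemma \ref{lemma:bound_RTe} to the first term (which, via $\|e(t)\|\le\|e(0)\|$ from \eqref{eq:exponential_e}, turns it into a constant multiple of $\|\dot p(t)\|$) and bounding the spectral norm $\|R^T R\|$ by $\bar\lambda_{\text{max}}$ over the compact sublevel set $\mathcal{B}(\rho)$, both contributions reduce to multiples of $\|\dot p(t)\|=\|R(t)^T e(t)+\delta(t)\|$, yielding $\|\dot\delta(t)\|\le \alpha\,\|R(t)^T e(t)+\delta(t)\|$ for a constant $\alpha$ of the form \eqref{eq:alpha}.

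Next I would convert this into a differential inequality for $\phi$. Since $\tfrac{\mathrm{d}}{\mathrm{d}t}(R^T e)=-\dot\delta$, one has $\tfrac{\mathrm{d}}{\mathrm{d}t}\|\delta\|\le\|\dot\delta\|$ and $\tfrac{\mathrm{d}}{\mathrm{d}t}\|R^T e\|\ge -\|\dot\delta\|$, so differentiating the quotient and using $\|R^T e+\delta\|\le(1+\phi)\|R^T e\|$ gives
\[ \dot\phi(t) \le \frac{\|\dot\delta(t)\|}{\|R(t)^T e(t)\|}\,(1+\phi(t)) \le \alpha\,(1+\phi(t))^2. \]
Integrating this Riccati-type bound from $t_h$ (where $\phi=0$) to $t_{h+1}$ (where $\phi=\gamma$) through $-\tfrac{\mathrm{d}}{\mathrm{d}t}(1+\phi)^{-1}\le\alpha$ yields $\tfrac{\gamma}{1+\gamma}\le\alpha(t_{h+1}-t_h)$, hence $t_{h+1}-t_h\ge \tfrac{\gamma}{\alpha(1+\gamma)}=\tau>0$, which is the claimed bound and excludes Zeno behavior. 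I expect the main obstacle to be the uniform growth estimate $\|\dot\delta\|\le\alpha\|\dot p\|$: it is what forces the use of Lemma \ref{lemma:bound_RTe} and of the compactness of $\mathcal{B}(\rho)$ (to make $\bar\lambda_{\text{max}}$ and the bound $\|e(0)\|$ legitimate constants rather than time-varying quantities), whereas the ensuing scalar ODE comparison is routine. A secondary subtlety is that $\phi$ is defined only where $\|R(t)^T e(t)\|\ne 0$; this deserves a remark, since at an exact target equilibrium $\|R^T e\|$ vanishes and no further triggering is required.
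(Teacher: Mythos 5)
Your proposal is correct and follows essentially the same route as the paper: the same ratio $\|\delta\|/\|R^T e\|$ starting from zero at each reset, the same growth bound on $\tfrac{\mathrm{d}}{\mathrm{d}t}(R^T e)$ via Lemma \ref{lemma:bound_RTe} and the compactness of $\mathcal{B}(\rho)$, and the same Riccati-type inequality $\dot\phi\le\alpha(1+\phi)^2$ integrated from $0$ to $\gamma$ (the paper phrases this last step through the comparison principle and the explicit solution $\phi(\tau,0)=\tau\alpha/(1-\tau\alpha)$, which is an immaterial difference). Your closing remark about $\|R^T e\|$ possibly vanishing is a fair observation that the paper leaves implicit.
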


\begin{proof}
%
%

We show that the growth of $\|\delta\|$ from 0 to the triggering threshold value $\gamma \|R^T e\|$ needs to take a positive time interval. To show this, the relative growth rate on $\|\delta(t)\|/ \|R(t)^T e(t)\|$ is considered. The following proof is  inspired by the one used in \cite{tabuada2007event}. In the following derivation, we omit the argument of time $t$ but it should be clear that each state variable and vector is considered as a function of $t$.

\begin{small}
\begin{align}
\frac{\text{d}}{\text{d}t} \frac{\|\delta\|}{\| R^T e\|}
\leq & \left (1+ \frac{\|\delta\|} {\| R^T e \|} \right)  \frac{ \|\dot R^T e + R^T \dot e\|}{\|R^T e\|} \nonumber \\
= &\left (1+ \frac{\|\delta\|} {\| R^T e \|} \right)  \frac{ \|\bar H^T \dot Z e + R^T \dot e\|}{\|R^T e\|} \nonumber \\
\leq &\left (1+ \frac{\|\delta\|} {\| R^T e \|} \right) \nonumber \\
& \left (    \frac{\sqrt{d} \| \bar H^T \| \|\bar H \| \|e\| \|\dot p\| + \|2R^T R(R^T e +\delta)\|}{\|R^T e\|} \right)  \nonumber \\
& \text{(appealing to Lemma \ref{lemma:bound_RTe})} \nonumber \\
\leq &\left (1+ \frac{\|\delta\|} {\| R^T e \|} \right) \nonumber \\
&\left ( (\sqrt{d} \| \bar H^T \| \|\bar H \| \|e(0)\| + \|2R^T R\|) \left (1 +\frac{\|\delta)\|}{\|R^T e\|}\right)  \right)  \nonumber \\
\leq  & \alpha\left (1+ \frac{\|\delta\|} {\| R^T e \|} \right)^2
\end{align}
\end{small}
where $\alpha$ is defined  in \eqref{eq:alpha}.   Note that   $\bar \lambda_{\text{max}}$ always exists and is finite (i.e. upper bounded) because  the set $\mathcal{B} (\rho)$ is a compact set with respect to $e$ and   the eigenvalues of a matrix are continuous functions of the matrix elements. Thus, $\alpha$   defined  in \eqref{eq:alpha} exists, which is   positive and upper bounded.   If we denote $\frac{\|\delta\|}{\| R^T e\|}$ by  $y$ we have the estimate $\dot y(t) \leq \alpha(1+y(t))^2$. By the comparison principle there holds $y(t) \leq \phi (t, \phi_0)$ where $\phi (t, \phi_0)$ is the solution of $\dot \phi = \alpha(1+ \phi)^2$ with initial condition $\phi(0, \phi_0) = \phi_0$.

Solving the differential equation for $\phi$ in the time interval $t \in [t_h, t_{h+1})$ yields $\phi(\tau, 0) = \frac{\tau \alpha}{1- \tau \alpha}$.
The inter-execution time interval is thus bounded by the time it takes for $\phi$ to evolve from 0 to $\gamma$. Solving the above equation, one obtains a positive lower bound   for the inter-event time interval $\tau =   \frac{\gamma}{\alpha(1+\gamma)}$. Thus, Zeno behavior is excluded for the formation control system \eqref{position_system}.
 The proof is completed.
\end{proof}
\begin{remark} We review several event-triggered formation strategies reported in the literature and highlight  the advantages of the event-triggered approach proposed in this section. In  \cite{sun2015event},  the triggering function is based on the information of the distance error $e$ only, which cannot guarantee a pure piecewise-constant update of the formation control input. The event function  designed in  \cite{Liu2015event} is based on the information of the relative position $z$, while the event function designed   \cite{CCC_CHENFEI} is   based on the  absolute position $p$. It is noted that   event functions and triggering conditions such as those in  \cite{Liu2015event} and \cite{CCC_CHENFEI} are very complicated, which may limit their practical applications.
The event function \eqref{trigger_function} designed in this section involves  the term $R^T e$, in which the  information of the relative position $z$ (involved in the entries of the rigidity matrix $R$) and of the distance error   $e$ has been included.
Such an event triggering  function    greatly reduces the controller complexity while at the same time also maintains the discrete-time update nature of the control input.
\end{remark}

%

\section{Event-based controller design: distributed case}  \label{sec:distributed}
\subsection{Distributed event controller design}

In this section we will further show how to design a distributed event-triggered formation controller in the sense that each agent can use only local measurements in terms of relative positions with respect to its neighbors to determine the next triggering time and control update value. Denote the event time for each agent $i$ as $t_0^i, t_1^i, \cdots, t_h^i, \cdots$. The dynamical system for agent $i$ to achieve the desired inter-agent distances is now rewritten as
\begin{align}\label{position_system_distributed_controller}
\dot p_i(t) &= u_i(t) = u_i(t_h^i), \forall t \in [t_h^i, t_{h+1}^i)
\end{align}
and we aim to design a distributed event function with feasible triggering condition such that the control input for agent $i$ is updated at its own event times $t_0^i, t_1^i, \cdots, t_h^i, \cdots$ based on local information. \footnote{ Again,   Filippov  solutions \cite{cortes2008discontinuous} are envisaged for the differential equation \eqref{position_system_distributed_controller}  with switching controls at every event updating instant. }

We consider the same Lyapunov function candidate as the one in Section \ref{sec:centralized}, but calculate the derivative as follows:
\begin{align} \label{derivative_lya_distributed}
\dot V(t) = &   \frac{1}{2} e(t)^T \dot e(t) =  -e(t)^T R(t) (R(t)^T e(t)  +  \delta(t))  \nonumber \\
= & -e(t)^T R(t)R(t)^T e(t) - e^T R(t)  \delta(t) \nonumber \\
 \leq & -\|R(t)^T e(t)\|^2 + \|e(t)^T R(t)  \delta(t) \|   \nonumber \\
\leq &- \sum_{i=1}^n \|\{R(t)^T e(t)\}_i\|^2 + \sum_{i=1}^n  \|\{R(t)^T e(t)\}_i\|  \| \delta_i(t) \|  \nonumber \\
\end{align}
where $\{R(t)^T e(t)\}_i \in \mathbb{R}^d$ is a vector block taken from the $(di-d+1)$th to the $(di)$th entries of the vector $R(t)^T e(t)$, and $\delta_i(t)$ is a   vector block taken from the $(di-d+1)$th to the $(di)$th entries of the vector $\delta(t)$. According to the definition of the rigidity matrix in \eqref{row_R}, it is obvious that $\{R(t)^T e(t)\}_i$ only involves local information  of agent $i$ in terms of relative position vectors $z_{k_{ij}}$ and distance errors $e_{k_{ij}}$ with $j \in \mathcal{N}_i$. Based on this, the control input for agent $i$ is designed as
\begin{align}\label{position_system_distributed}
\dot p_i(t) & =  u_i(t_h^i) =   \sum_{j \in \mathcal{N}_i} (p_j(t_h^i) - p_i(t_h^i)) e_k(t_h^i)) \nonumber \\
& = \{R(t_h^i)^T e(t_h^i)\}_i\,\,\,\forall t \in [t_h^i, t_{h+1}^i)
\end{align}

Note that there holds
\begin{align} \label{define_delta_i}
\delta_i(t) = \{R(t_h^i)^T e(t_h^i)\}_i - \{R(t)^T e(t)\}_i
\end{align}
and we can restate \eqref{position_system_distributed} as $\dot p_i(t) = - \{R(t)^T e(t)\}_i - \delta_i(t), t \in [t_h^i, t_{h+1}^i)$.

By using the inequality $\|\{R(t)^T e(t)\}_i\|  \| \delta(t)_i \| \leq \frac{1}{2a_i} \| \delta(t)_i \|^2 + \frac{a_i}{2} \|\{R(t)^T e(t)\}_i\|^2$ with $a_i \in (0,1)$, the above inequality \eqref{derivative_lya_distributed} on $\dot V$ can be further developed  as
\begin{align}
\dot V(t) \leq     & - \sum_{i=1}^n \|\{R(t)^T e(t)\}_i\|^2 \nonumber \\
 & + \sum_{i=1}^n  \frac{a_i}{2} \|\{R(t)^T e(t)\}_i\|^2 +  \sum_{i=1}^n   \frac{1}{2a_i} \| \delta_i(t) \|^2  \nonumber \\
= &  -\sum_{i=1}^n  \frac{2-a_i}{2}\|\{R(t)^T e(t)\}_i\|^2 +  \sum_{i=1}^n  \frac{1}{2a_i} \| \delta_i(t) \|^2    \nonumber
\end{align}

 If we enforce the norm of $\delta_i(t)$ to satisfy
\begin{equation} \label{distributed_trigger2}
    \frac{1}{2a_i} \| \delta_i(t) \|^2  \leq  \gamma_i \frac{2-a_i}{2}\|\{R(t)^T e(t)\}_i\|^2
\end{equation}
with $\gamma_i \in (0,1)$, we can guarantee
\begin{align} \label{derivative_lya_distributed_3}
\dot V(t) \leq \sum_{i=1}^n (\gamma_i -1) \frac{2-a_i}{2}\|\{R(t)^T e(t)\}_i\|^2
\end{align}

This implies that one can design a local triggering function for agent $i$  as
\begin{equation} \label{distributed_trigger_local_function}
f_i(t) : =  \| \delta_i(t) \|^2  - \gamma_i a_i (2-a_i)  \|\{R(t)^T e(t)\}_i\|^2
\end{equation}
and the event time $t_h^i$ for agent $i$ is defined to satisfy $f_i( t_h^i ) = 0$ for $h  = 0,1,2, \cdots$. For the time interval $t \in [t_h^i, t_{h+1}^i)$, the control input is chosen as $u_i(t) = u_i(t_h^i)$  until the next event for agent $i$ is triggered. Furthermore, every time   an event is triggered for agent $i$, the local event vector $\delta_i$  will be reset to zero.   Note that the condition $a_i \in (0,1)$ will also be justified in later analysis in Lemma \ref{lemma:feasibility}.

The convergence result with the distributed event-based formation controller and triggering function is summarized as follows.
\begin{theorem} \label{theorem:distributed_exponential}
Suppose that the target formation is infinitesimally and minimally rigid and the initial formation shape is close to the target one.  By using the above controller \eqref{position_system_distributed} and the distributed event  function \eqref{distributed_trigger_local_function}, all the agents will reach the desired formation shape locally exponentially fast.
\end{theorem}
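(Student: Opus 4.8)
The plan is to mirror the centralized convergence argument of Theorem~\ref{theorem_main_1}, but now starting from the distributed differential inequality on $\dot{V}$ that the triggering condition has already been engineered to produce. The work leading up to \eqref{derivative_lya_distributed_3} has done the heavy lifting: once every agent enforces its local condition \eqref{distributed_trigger2} (equivalently, triggers at $f_i=0$ in \eqref{distributed_trigger_local_function}), we are guaranteed
\begin{align}
\dot V(t) \leq \sum_{i=1}^n (\gamma_i - 1) \frac{2-a_i}{2} \|\{R(t)^T e(t)\}_i\|^2. \nonumber
\end{align}
Since $\gamma_i \in (0,1)$ and $a_i \in (0,1)$, every coefficient $(\gamma_i-1)(2-a_i)/2$ is strictly negative, so $\dot V \leq 0$ and $e(t)$ is bounded, which (by the sub-level set argument around $\mathcal{B}(\rho)$) keeps the formation infinitesimally and minimally rigid throughout the evolution.

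The key step I would carry out next is to convert the per-agent sum into a single negative-definite expression in $\|R(t)^T e(t)\|^2$. First I would set $\beta := \min_i (1-\gamma_i)\frac{2-a_i}{2} > 0$, so that
\begin{align}
\dot V(t) \leq -\beta \sum_{i=1}^n \|\{R(t)^T e(t)\}_i\|^2 = -\beta \|R(t)^T e(t)\|^2, \nonumber
\end{align}
using that the blocks $\{R(t)^T e(t)\}_i$ partition the full vector $R(t)^T e(t)$ and hence their squared norms sum to $\|R(t)^T e(t)\|^2$. From here the argument is identical to the centralized case: I would invoke Lemma~\ref{L_RP} to introduce $\bar\lambda_{\text{min}}$, the minimum eigenvalue of $M(e) = R(p)R(p)^T$ over the compact set $\mathcal{B}(\rho)$, which is strictly positive by infinitesimal rigidity and exists by continuity of eigenvalues. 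Then $\|R(t)^T e(t)\|^2 = e^T R R^T e \geq \bar\lambda_{\text{min}}\|e(t)\|^2$ yields $\dot V(t) \leq -\beta\bar\lambda_{\text{min}}\|e(t)\|^2$. Since $V = \frac14\|e\|^2$, this is a standard exponential-decay inequality $\dot V \leq -4\beta\bar\lambda_{\text{min}} V$, giving
\begin{align}
\|e(t)\| \leq \text{exp}(-\kappa t)\|e(0)\| \nonumber
\end{align}
with $\kappa = 2\beta\bar\lambda_{\text{min}} = 2\bar\lambda_{\text{min}}\min_i (1-\gamma_i)\frac{2-a_i}{2}$, establishing local exponential convergence of the distance errors, which by Lemma~\ref{lemma:fixed_final_position}-style reasoning corresponds to reaching the desired rigid shape exponentially fast.

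I expect the only genuine subtlety, rather than routine calculation, to lie in justifying that the argument is legitimately \emph{self-contained} and \emph{local} in the distributed setting. Specifically, I must confirm that Lemma~\ref{L_smooth1} still applies so that $\dot V$ can be written purely in terms of $e$, and that the boundedness $\|e(t)\| \leq \|e(0)\|$ keeps every agent's state inside $\mathcal{B}(\rho)$ where the rigidity and positive-definiteness hypotheses hold; this is what validates using the fixed constant $\bar\lambda_{\text{min}}$ over the whole trajectory rather than a time-varying bound. A secondary point worth a sentence is that the inequality $\|\{R(t)^T e(t)\}_i\|\,\|\delta_i(t)\| \leq \frac{1}{2a_i}\|\delta_i(t)\|^2 + \frac{a_i}{2}\|\{R(t)^T e(t)\}_i\|^2$ (Young's inequality) together with the requirement $a_i \in (0,1)$ is exactly what makes each coefficient negative; I would flag that the feasibility of this choice of $a_i$ and the exclusion of Zeno behavior are deferred to Lemma~\ref{lemma:feasibility} and the companion Zeno analysis, so that the present proof establishes convergence conditionally on the triggering being well-defined.
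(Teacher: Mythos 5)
Your proposal is correct and follows essentially the same route as the paper: the paper's own proof simply cites the chain \eqref{derivative_lya_distributed}--\eqref{distributed_trigger_local_function} and states the rate $\kappa = 2\zeta_{\text{min}}\bar\lambda_{\text{min}}$ with $\zeta_{\text{min}} = \min_i (1-\gamma_i)\frac{2-a_i}{2}$, which is exactly your $\beta$. You have merely written out the omitted steps (pulling out the minimum coefficient, summing the block norms to recover $\|R^Te\|^2$, and applying Lemma~\ref{L_RP} over $\mathcal{B}(\rho)$), arriving at the same decay estimate.
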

\begin{proof}
The analysis is similar to Theorem \ref{theorem_main_1} and we omit several steps here. Based on the derivation in Eqs. \eqref{derivative_lya_distributed}-\eqref{distributed_trigger_local_function}, one can conclude that
\begin{align} \label{eq:exponential_e_distributed}
\|e(t)\| \leq \text{exp}(- \kappa t) \|e(0)\|
\end{align}
  with the exponential rate  no less than $\kappa = 2  \zeta_{\text{min}}  \bar \lambda_{\text{min}}$ where $\zeta_{\text{min}} = \text{min}_{i} (1- \gamma_i ) \frac{2-a_i}{2}$.
\end{proof}

The exponential convergence of $e(t)$ implies that the above local event-triggered controller \eqref{position_system_distributed} also guarantees the convergence of $p$ to a fixed point, by which one can conclude a similar result to the one in Lemma \ref{lemma:fixed_final_position}.

For the formation system with the distributed event-based controller, an analogous result to Lemma \ref{lemma:local_coordinate}  on coordinate frame requirement  is as follows.

\begin{lemma}
To implement the distributed formation controller \eqref{position_system_distributed}, each agent can use its own local coordinate frame to measure the relative positions to its neighbors and a global coordinate frame is not required. Furthermore, to detect the  distributed event condition \eqref{distributed_trigger2}, a local coordinate frame is sufficient which is not required to be aligned with the global coordinate frame.
\end{lemma}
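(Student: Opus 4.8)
The plan is to mirror the invariance argument behind Lemma~\ref{lemma:local_coordinate} for the centralized case, exploiting the fact that every quantity entering the distributed controller \eqref{position_system_distributed} and the triggering test \eqref{distributed_trigger2} is built solely from \emph{relative positions} and \emph{squared-distance errors}, both of which are insensitive to the individual coordinate frame an agent adopts. Concretely, I would suppose that agent~$i$ senses relative displacements in its own local frame $\Sigma_i$, related to a fixed global frame by a rotation $Q_i \in SO(d)$ together with a translation. In $\Sigma_i$ a neighbor's relative position $p_j-p_i$ is observed as $Q_i^T(p_j-p_i)$, since the translation cancels in the difference, while each distance error $e_{k_{ij}}=\|p_i-p_j\|^2-d_{k_{ij}}^2$ is a scalar formed from a Euclidean norm and is therefore invariant under both the rotation and the translation.

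First I would establish the implementability claim. Writing the input of agent~$i$ as $u_i=\sum_{j\in\mathcal{N}_i}(p_j-p_i)e_k=\{R^Te\}_i$, one sees that, when assembled from the data available in $\Sigma_i$, it becomes $\sum_{j\in\mathcal{N}_i}Q_i^T(p_j-p_i)e_k=Q_i^T\{R^Te\}_i$. Hence the velocity that agent~$i$ computes and applies in its own frame is exactly the global velocity pre-rotated by $Q_i^T$, which is precisely the representation of that same physical velocity in $\Sigma_i$. Consequently each agent reproduces the intended motion using only the local relative measurements $p_j-p_i$, $j\in\mathcal{N}_i$, with no knowledge of, and no alignment to, a common global frame.

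Next I would verify that the triggering function \eqref{distributed_trigger_local_function} is frame-independent. Its two ingredients are $\{R^Te\}_i$ and, through \eqref{define_delta_i}, the stored vector $\{R(t_h^i)^Te(t_h^i)\}_i$; since agent~$i$ keeps using the \emph{same} local frame $\Sigma_i$ across time, both are expressed in $\Sigma_i$ and each carries the common left factor $Q_i^T$, so that $\delta_i$ evaluated locally equals $Q_i^T$ times its global counterpart. Because \eqref{distributed_trigger2} compares only the \emph{norms} $\|\delta_i\|$ and $\|\{R^Te\}_i\|$, and $Q_i^T$ is orthogonal, these norms agree with their global values; thus agent~$i$ can detect $f_i(t_h^i)=0$ from purely local information, requiring neither a global frame nor any inter-agent frame alignment.

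There is no deep obstacle here, as the statement is essentially the distributed counterpart of Lemma~\ref{lemma:local_coordinate}; the one point that needs care is the bookkeeping which guarantees that the reset value stored at $t_h^i$ and the running value $\{R(t)^Te(t)\}_i$ are consistently expressed in agent~$i$'s time-invariant local frame, so that their difference $\delta_i$ is well defined and its norm genuinely orthogonally invariant. Once this is settled, the argument reduces to the elementary invariance of the Euclidean norm under orthogonal transformations, and the claim follows exactly along the lines of \cite[Lemma 4]{sun2014finite}.
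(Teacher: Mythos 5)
Your proposal is correct and follows essentially the same route as the paper: introduce a rotation-plus-translation relation between agent $i$'s local frame and the global frame, observe that relative positions transform by the rotation alone (the translation cancels) and that distance errors are scalar invariants, and conclude that the norms $\|\delta_i\|$ and $\|\{R^Te\}_i\|$ entering the triggering test are unchanged by the orthogonal factor. Your explicit remark that the stored value $\{R(t_h^i)^Te(t_h^i)\}_i$ and the running value must be expressed in the same time-invariant local frame, so that $\delta_i$ inherits a single common rotation factor and its norm is genuinely invariant, is a slightly more careful rendering of a step the paper passes over quickly, but it is not a different argument.
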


\begin{proof}
The proof of the first statement on the distributed controller \eqref{position_system_distributed} follows similar steps as in \cite[Lemma 4]{sun2014finite} and is omitted here. We then prove the second statement on the event condition   \eqref{distributed_trigger2}. Suppose agent $i$'s position in a global coordinate frame is measured as $p_i^g$, while $p_i^i$ and $p_j^i$ stand  for agent $i$ and its neighboring agent $j$'s positions measured by agent $i$'s  local coordinate frame.  Clearly,  there exist a rotation matrix $\mathcal{Q}_i \in \mathbb{R}^{d \times d}$ and a translation vector $\vartheta_i \in \mathbb{R}^{d}$, such that $p_j^i = \mathcal{Q}_i p_j^g + \vartheta_i$. We also denote the relative position   between agent $i$ and agent $j$ as $z_{k_{ij}}^i$  measured by agent $i$'s local coordinate frame, and $z_{k_{ij}}^g$   measured by the global coordinate frame. Obviously there holds $z_{k_{ij}}^i = p_j^i -  p_i^i = \mathcal{Q}_i ( p_j^g -  p_i^g) = \mathcal{Q}_i z_{k_{ij}}^g$ and thus  $\|z_{k_{ij}}^i\| = \|z_{k_{ij}}^g\|$.  Also notice that the event condition   \eqref{distributed_trigger2} involves the terms $\delta_i$ and $\{R(t)^T e(t)\}_i$ which are functions of the relative position vector $z$, and thus event detection using \eqref{distributed_trigger_local_function} remains unchanged regardless of what coordinate frames are used.   Since $\mathcal{Q}_i$ and $\vartheta_i$  are chosen arbitrarily, the above analysis concludes  that the detection of the local event condition \eqref{distributed_trigger_local_function} is independent of a global coordinate basis, which implies that agent $i$'s local coordinate frame is sufficient to implement \eqref{distributed_trigger2}.
\end{proof}
The above lemma indicates that  the distributed event-based controller \eqref{position_system_distributed} and distributed event  function \eqref{distributed_trigger_local_function} still guarantee  the $SE(N)$ invariance property and enable a convenient implementation  for the proposed formation control system  without coordinate alignment for each individual agent.

Differently to Lemma \ref{lemma_fixed_centroid}, we show that the distributed event-based controller proposed in this section cannot guarantee a fixed formation centroid.

\begin{lemma}
The position of the formation centroid is not guaranteed to be fixed when the distributed event-based controller \eqref{position_system_distributed} and event function \eqref{distributed_trigger_local_function} are applied.
\end{lemma}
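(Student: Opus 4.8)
The plan is to compute the centroid velocity directly and show that it collapses to the aggregate of the per-agent measurement errors, a quantity that has no structural reason to vanish once updates are asynchronous. Writing $\bar p(t) = \frac{1}{n}(\mathbf{1}_n \otimes I_d)^T p(t)$ exactly as in Lemma \ref{lemma_fixed_centroid} and differentiating, I would substitute the real-time-plus-error form of the closed loop, $\dot p_i(t) = -\{R(t)^T e(t)\}_i - \delta_i(t)$, which follows from \eqref{position_system_distributed} and \eqref{define_delta_i}. This yields
\[
\dot{\bar p}(t) = -\frac{1}{n}\sum_{i=1}^n \{R(t)^T e(t)\}_i - \frac{1}{n}\sum_{i=1}^n \delta_i(t).
\]

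First I would dispose of the real-time term. Since $\sum_{i=1}^n \{R(t)^T e(t)\}_i = (\mathbf{1}_n \otimes I_d)^T R(t)^T e(t) = \big(R(t)(\mathbf{1}_n \otimes I_d)\big)^T e(t)$, and since $R = Z^T \bar H$ with $\bar H(\mathbf{1}_n \otimes I_d) = 0$ (because $ker(\bar H) = span\{\mathbf{1}_n \otimes I_d\}$), this term vanishes identically, exactly by the argument used in \eqref{eq:formation_centroid_fixed}. Hence the centroid velocity reduces to $\dot{\bar p}(t) = -\frac{1}{n}\sum_{i=1}^n \delta_i(t)$, and the whole question becomes whether the aggregate error $\sum_{i=1}^n \delta_i(t)$ is identically zero.

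The decisive step is to explain why this aggregate does \emph{not} vanish in the distributed setting, in contrast to the centralized one. In the centralized scheme all agents share a common snapshot time $t_h$, so each of $R(t_h)^T e(t_h)$ and $R(t)^T e(t)$ separately sums to zero over its blocks by the same $\bar H(\mathbf{1}_n \otimes I_d) = 0$ identity, forcing $\sum_i \delta_i(t) = 0$. Here, by contrast, $\delta_i(t) = \{R(t_h^i)^T e(t_h^i)\}_i - \{R(t)^T e(t)\}_i$ freezes agent $i$'s block at its \emph{own} last event time $t_h^i$. Edge-wise cancellation would require, for each edge $(i,j)$, that the frozen contribution $(p_i-p_j)e_k$ from endpoint $i$ (evaluated at $t_h^i$) and the contribution $(p_j-p_i)e_k$ from endpoint $j$ (evaluated at $t_{h'}^j$) be exact negatives; this holds only when $t_h^i = t_{h'}^j$, which asynchronous triggering does not enforce. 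Thus $\sum_i \{R(t_h^i)^T e(t_h^i)\}_i$ generically fails to telescope to zero, and $\dot{\bar p}(t) \neq 0$.

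To convert this from a genericity remark into a proof of a \emph{negative} statement, it suffices to exhibit a single witness. I would take a minimal example, a triangular formation with $n=3$ and $d=2$, fix an initial configuration off the target shape, and let one agent trigger while its neighbors still hold stale values; a direct evaluation then shows $\sum_i \delta_i(t) \neq 0$ at that instant, giving a nonzero centroid velocity. I expect the main obstacle to be precisely this point: because the claim is an impossibility-type result, a genericity argument alone is not fully rigorous, and the actual work is to pin down one explicit asynchronous trajectory (or to appeal to the paper's simulations) on which the centroid provably drifts. Everything preceding it is the short linear-algebra reduction above.
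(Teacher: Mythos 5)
Your proposal is correct and follows essentially the same route as the paper: differentiate the centroid and observe that asynchronous triggering destroys the cancellation that made Lemma \ref{lemma_fixed_centroid} work. The paper's own proof stops at the qualitative remark that $\dot p(t)$ can no longer be decomposed into $\bar H$ acting on a single distance error vector evaluated at one common event time, so the $\bar H(\mathbf{1}_n \otimes I_d)=0$ argument of \eqref{eq:formation_centroid_fixed} is unavailable. You go a step further by explicitly reducing the centroid velocity to $\dot{\bar p}(t) = -\tfrac{1}{n}\sum_{i}\delta_i(t)$, which (since the real-time blocks $\{R(t)^Te(t)\}_i$ sum to zero) equals $-\tfrac{1}{n}\sum_i \{R(t_h^i)^T e(t_h^i)\}_i$, and by locating the failure at the edge level, where the two endpoint contributions of an edge are frozen at different event times $t_h^i \neq t_{h'}^j$. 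Your concern that an impossibility-type claim strictly requires an explicit witness trajectory is well taken, but note that it applies equally to the paper's own proof, which exhibits no counterexample either; since the lemma only asserts the centroid is ``not guaranteed'' to be fixed, both arguments are pitched at the level of showing that the sufficient condition exploited in Lemma \ref{lemma_fixed_centroid} breaks down, and your version makes that breakdown more precise than the paper does.
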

\begin{proof}
The dynamics for the formation centroid can be derived as
\begin{align} \nonumber
\dot {\bar p}(t) &=   \frac{1}{n}(\mathbf{1}_n \otimes I_d)^T \dot p(t)
\end{align}
However, due to the \emph{asymmetric} update of each agent's control input by using the local event function \eqref{distributed_trigger_local_function} to determine a local triggering time, one cannot decompose the vector $\dot p(t)$ into   terms involving $\bar H$ and a single distance error vector as in \eqref{eq:formation_centroid_fixed}. Thus $\dot {\bar p}(t)$ is not guaranteed to be zero and there exist  motions for the formation centroid when the distributed event-based controller \eqref{position_system_distributed} is applied.
\end{proof}

\begin{remark}
We note a key property of the distributed event-based controller  \eqref{position_system_distributed} and \eqref{distributed_trigger_local_function} proposed in this section. It is obvious from \eqref{position_system_distributed} and \eqref{distributed_trigger_local_function} that each agent $i$ updates  its own control input by using only local information in terms of relative positions of its neighbors (which can be measured by agent $i$'s local coordinate system), and is not affected by the control input updates from its neighbors. Thus, such local event-triggered approach does not require any communication between any two agents.
\end{remark}

\subsection{Triggering behavior analysis}
This subsection aims to analyze some properties of the distributed event-triggered control strategy proposed above.  Generally speaking, singular triggering means   no more triggering exists after a feasible triggering event, and continuous triggering means that  events are triggered continuously.  For the definitions of singular triggering and continuous triggering, we refer the reader to \cite{fan2013distributed}. The following lemma shows the triggering feasibility with the local and distributed event-based controller \eqref{position_system_distributed}  by excluding these two cases.

\begin{lemma} \label{lemma:feasibility}
(Triggering feasibility) Consider the distributed formation system with the distributed event-based controller \eqref{position_system_distributed} and  the distributed event function \eqref{distributed_trigger_local_function}. If there exists $t_h^i$ such that $\{R(t_h^i)^T e(t_h^i)\}_i \neq 0$, then
\begin{itemize}
\item (No singular triggering) agent $i$ will not exhibit singular triggering for all $t>t_h^i$.
\item (No continuous triggering) agent $i$ will not exhibit continuous triggering for all $t>t_h^i$.
\end{itemize}
\end{lemma}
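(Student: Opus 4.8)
The plan is to treat the two bullets separately, reusing the machinery already built for the centralized Zeno analysis. For the exclusion of singular triggering I would argue by contradiction. Suppose agent $i$ never triggers again after $t_h^i$, so that $f_i(t)<0$ for all $t>t_h^i$, i.e. the local condition \eqref{distributed_trigger2} holds strictly for agent $i$ on $[t_h^i,\infty)$. Every other agent $j$ keeps $f_j\le 0$ by its own triggering rule, so \eqref{distributed_trigger2} holds for all agents simultaneously; hence the Lyapunov estimate \eqref{derivative_lya_distributed_3} remains valid and, exactly as in Theorem \ref{theorem:distributed_exponential}, $\|e(t)\|$ decays exponentially to zero. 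Since the framework stays in $\mathcal{B}(\rho)$ where $R$ is bounded, $\{R(t)^T e(t)\}_i\to 0$. On the other hand, right after the reset $\delta_i(t_h^i)=0$, and $\delta_i(t)=\{R(t_h^i)^T e(t_h^i)\}_i-\{R(t)^T e(t)\}_i\to\{R(t_h^i)^T e(t_h^i)\}_i\ne 0$ by hypothesis. Therefore for $t$ large the vanishing threshold $\gamma_i a_i(2-a_i)\|\{R(t)^T e(t)\}_i\|^2$ is eventually dominated by $\|\delta_i(t)\|^2$, giving $f_i(t)>0$; since $f_i(t_h^i)<0$ (which uses $a_i(2-a_i)>0$, guaranteed by $a_i\in(0,1)$) and $f_i$ is continuous in $t$, $f_i$ must cross zero and a new event is triggered, contradicting singularity.

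For the exclusion of continuous triggering I would produce a strictly positive lower bound on the inter-event interval, mirroring the proof of the Zeno theorem. Because $\{R(t_h^i)^T e(t_h^i)\}_i\ne 0$, continuity gives a neighborhood $[t_h^i,t_h^i+s)$ on which $\|\{R(t)^T e(t)\}_i\|\ge c>0$. On this interval I would study the ratio $y_i(t)=\|\delta_i(t)\|/\|\{R(t)^T e(t)\}_i\|$, which starts at $y_i(t_h^i)=0$ after the reset, and bound its growth by $\dot y_i\le (1+y_i)\,\|\tfrac{d}{dt}\{R^T e\}_i\|/\|\{R^T e\}_i\|$. Using $\dot\delta_i=-\tfrac{d}{dt}\{R^T e\}_i$ together with the block inequality $\|\{\cdot\}_i\|\le\|\cdot\|$ and Lemma \ref{lemma:bound_RTe}, the numerator is bounded by $\alpha\|R^T e+\delta\|$ with $\alpha$ as in \eqref{eq:alpha}; on $\mathcal{B}(\rho)$ all these quantities are uniformly bounded (using $\|e(t)\|\le\|e(0)\|$, $\bar\lambda_{\text{max}}$, and $\|\delta_j\|\le\sqrt{\gamma_j a_j(2-a_j)}\|\{R^T e\}_j\|$), so $\dot y_i\le\beta(1+y_i)$ for a finite constant $\beta$. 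The comparison principle then yields $y_i(t)\le e^{\beta(t-t_h^i)}-1$, and a new event requires $y_i$ to reach $\sqrt{\gamma_i a_i(2-a_i)}$; hence the inter-event interval is at least $\tfrac{1}{\beta}\ln\big(1+\sqrt{\gamma_i a_i(2-a_i)}\big)>0$, excluding continuous triggering.

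The main obstacle is the genuinely non-local nature of $\tfrac{d}{dt}\{R(t)^T e(t)\}_i$: unlike the block $\{R^T e\}_i$ itself, its time derivative couples to the velocities, and hence to the asynchronous control updates, of agent $i$'s neighbors through $\dot e=2R\dot p$ and $\dot Z$. Consequently $\|\tfrac{d}{dt}\{R^T e\}_i\|$ cannot be bounded by $\|\{R^T e\}_i\|$ alone, so the clean self-contained ratio recursion of the centralized case is unavailable. The way around this, which I would make precise, is to accept a global constant in the bound: rather than closing the estimate purely locally, I bound the coupling terms by their global norms on the compact set $\mathcal{B}(\rho)$ and exploit the positive local lower bound $c$ on $\|\{R^T e\}_i\|$ guaranteed by continuity near $t_h^i$. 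This yields a positive, if conservative, lower bound on the inter-event time, which is all that is needed to rule out continuous triggering. Finally, I would remark that $a_i\in(0,1)$ enters only to keep the coefficients $a_i(2-a_i)$ and $(2-a_i)/2$ strictly positive (indeed in $(0,1)$), so that both the reset value $f_i(t_h^i)<0$ and the nontrivial threshold used above are well defined.
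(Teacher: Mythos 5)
Your proof is correct. For the first bullet it is, in substance, the paper's own argument: the paper derives from \eqref{distributed_trigger2} the band $\underline{\chi}_i \leq \|\{R(t)^T e(t)\}_i\| \leq \overline{\chi}_i$ with $\underline{\chi}_i = \|\{R(t_h^i)^T e(t_h^i)\}_i\|/(1+\sqrt{\varrho_i})>0$, then invokes the exponential decay of $\|e(t)\|$ from Theorem \ref{theorem:distributed_exponential} to force $\|\{R(t)^T e(t)\}_i\|$ down to $\underline{\chi}_i$, whence an event; your contradiction via $\delta_i(t)\to\{R(t_h^i)^T e(t_h^i)\}_i\neq 0$ against a vanishing threshold is the same mechanism in different packaging. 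For the second bullet you diverge: the paper settles for a qualitative continuity argument (after the reset $\delta_i=0$, $f_i$ is strictly negative precisely because the hypothesis makes the threshold strictly positive, and $f_i$ evolves continuously, so the next event is strictly later), whereas you import the comparison-principle machinery of Theorem 2 to obtain an explicit positive lower bound $\tfrac{1}{\beta}\ln\bigl(1+\sqrt{\varrho_i}\bigr)$ on the inter-event interval. That is heavier than this lemma needs but buys more, and it is essentially the technique the paper defers to Theorem \ref{theorem:no_zeno1} for Zeno exclusion. You correctly identify the real obstacle --- $\tfrac{\text{d}}{\text{d}t}\{R(t)^T e(t)\}_i$ couples to the neighbors' asynchronous updates, so the ratio ODE cannot be closed with purely local quantities --- and your fix (global bounds on $\mathcal{B}(\rho)$ over a local lower bound $c$) works; a cleaner choice is $c=\underline{\chi}_i$, which the band supplies on the \emph{entire} inter-event interval rather than on an ad hoc continuity neighborhood. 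Note only that the resulting $\beta$ depends on $h$ through $\underline{\chi}_i$, so this does not by itself give a uniform-in-$h$ bound --- nor does the lemma require one; that is exactly why Theorem \ref{theorem:no_zeno1} needs the extra $\epsilon$-condition.
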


\begin{proof}
The proof is inspired by \cite{fan2013distributed}. First note that due to $a_i \in (0,1)$, there holds $a_i(2-a_i)  \in (0,1)$ and because $\gamma_i \in (0,1)$, there further holds $\gamma_i a_i(2-a_i) \in (0,1)$. For notational convenience, define
\begin{align} \label{eq:varrho}
\varrho_i: = \gamma_i a_i(2-a_i)
\end{align}
 From \eqref{define_delta_i} and \eqref{distributed_trigger2}, for $t \in [t_h^i, t_{h+1}^i)$ one has
\begin{align}
\underline{\chi}_i := \frac{\|\{R(t_h^i)^T e(t_h^i)\}_i\|}{1+\sqrt{\varrho_i}} & \leq \|\{R(t)^T e(t)\}_i\|  \nonumber \\
& \leq  \frac{\|\{R(t_h^i)^T e(t_h^i)\}_i\|}{1-\sqrt{\varrho_i}} =: \overline{\chi}_i
\end{align}
We first prove the   statement on non-singular triggering. According to the definition of the event-triggering function \eqref{distributed_trigger_local_function}, local events for agent $i$ can only occur either when  $\|\{R(t)^T e(t)\}_i\|$ equals $\underline{\chi}_i$ or when $\|\{R(t)^T e(t)\}_i\|$ equals $\overline{\chi}_i$. Note that $\|\{R(t)^T e(t)\}_i\|^2 \leq \|\{R(t)^T e(t)\}\|^2 \leq \bar \lambda_{\text{max}}(R(t)^TR(t))  \|e(t)\|^2$, where $\bar \lambda_{\text{max}}$ is the largest eigenvalue of $R(t)^TR(t)$ which is bounded for any  $e \in \mathcal{B}(\rho)$. Also note that $\|e(t)\|^2$ decays exponentially fast to zero as proved in Theorem 3. This implies that $\|\{R(t)^T e(t)\}_i\|$ will eventually decrease to $\underline{\chi}_i$.  By assuming that $\{R(t_h^i)^T e(t_h^i)\}_i \neq 0$, the next event time $t_{h+1}^i$ for agent $i$ always exists with $\{R(t_{h+1}^i)^T e(t_{h+1}^i)\}_i \neq 0$.

The second statement can be proved by using similar arguments to those above, and by observing that $\{R(t)^T e(t)\}_i$ evolves  continuously and a local event is triggered if and only if \eqref{distributed_trigger_local_function} is satisfied.
\end{proof}
 In the following we will further discuss the possibility of the Zeno behavior in the distributed event-based formation system \eqref{position_system_distributed}.

\begin{theorem}  \label{theorem:no_zeno1}
(Exclusion of Zeno triggering) Consider the distributed formation system with the distributed event-based controller \eqref{position_system_distributed} and  the distributed event function \eqref{distributed_trigger_local_function}.
\begin{itemize}
\item At least one agent does not exhibit Zeno triggering behavior.
\item In addition, if there exists $\epsilon >0$ such that $\|\{R(t)^T e(t)\}_i\|^2 \geq \epsilon \|e(t)\|^2$ for all $i =1,2, \cdots, n$ and $t \geq 0$, then there exists a common positive lower bound for any inter-event time interval for each agent. In this case, no agent will exhibit Zeno triggering behavior.
\end{itemize}
\end{theorem}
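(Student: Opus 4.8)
The plan is to transfer the dwell-time comparison argument of the centralized case (the theorem preceding \eqref{eq:alpha}) to the \emph{local} signals $\{R(t)^T e(t)\}_i$, while tracking carefully that the denominator in the per-agent triggering ratio is now a $d$-block rather than the full vector $R^T e$. For each agent $i$ write $g_i(t) := \{R(t)^T e(t)\}_i$ and set $y_i(t) := \|\delta_i(t)\|/\|g_i(t)\|$; by \eqref{define_delta_i} an event fires exactly when $y_i = \sqrt{\varrho_i}$ (with $\varrho_i$ as in \eqref{eq:varrho}), after which $\delta_i$, and hence $y_i$, is reset to $0$. Since $\dot\delta_i = -\dot g_i$, the same manipulation as in the centralized proof gives $\dot y_i \le (1 + y_i)\,\|\dot g_i\|/\|g_i\|$. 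Because $g_i$ is a sub-vector of $R^T e$, we have $\|\dot g_i\| \le \|\tfrac{d}{dt}(R^T e)\| \le \alpha' \|R(t)^T e(t) + \delta(t)\|$, where the constant $\alpha'$ is assembled from $\|\bar H\|$, $\|e(0)\|$ and $\bar\lambda_{\text{max}}(R^T R)$ exactly as $\alpha$ in \eqref{eq:alpha}, using Lemma \ref{lemma:bound_RTe} for the $\bar H^T \dot Z e$ term. The whole question reduces to whether $\|R^T e + \delta\|/\|g_i\|$ can be bounded by a constant.

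For the second (stronger) statement this is immediate. The per-agent thresholds $\|\delta_j\| \le \sqrt{\varrho_j}\,\|g_j\|$, which hold at every instant between events, together with $\sum_j \|g_j\|^2 = \|R^T e\|^2$, give $\|\delta\| \le \sqrt{\varrho_{\text{max}}}\,\|R^T e\|$ where $\varrho_{\text{max}} := \max_j \varrho_j < 1$, hence $\|R^T e + \delta\| \le (1+\sqrt{\varrho_{\text{max}}})\|R^T e\|$. The hypothesis $\|g_i\|^2 \ge \epsilon \|e\|^2$ combined with $\|R^T e\|^2 \le \bar\lambda_{\text{max}}\|e\|^2$ then yields $\|R^T e\| \le \sqrt{\bar\lambda_{\text{max}}/\epsilon}\,\|g_i\|$, so that $\|\dot g_i\|/\|g_i\| \le C$ with $C := \alpha'(1+\sqrt{\varrho_{\text{max}}})\sqrt{\bar\lambda_{\text{max}}/\epsilon}$, a single constant independent of $i$. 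The comparison $\dot y_i \le C(1+y_i)$ with $y_i(t_h^i)=0$ forces any inter-event interval to be at least $\tfrac{1}{C}\ln(1+\sqrt{\varrho_{\text{min}}})>0$, a common positive lower bound, so no agent is Zeno.

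For the first statement the hypothesis is absent, and the local denominator $\|g_i\|$ may collapse to zero even while $\|R^T e\|$ stays bounded away from zero, since the energy of $R^T e$ can concentrate on other blocks. The plan is therefore a contradiction argument that \emph{locates} a well-behaved block. Suppose every agent were Zeno, and let $T^* := \min_i t_\infty^i$ be the earliest Zeno time, so the trajectory exists on $[0,T^*)$ and $e(\cdot)$ extends continuously to $T^*$ with $e(T^*)\neq 0$ (the analysis of Theorem \ref{theorem:distributed_exponential} makes $V$ nonincreasing, and local uniqueness near the smooth target precludes $e$ reaching $0$ in finite time). Then $\sum_i \|g_i(T^*)\|^2 = \|R^T e(T^*)\|^2 \ge \bar\lambda_{\text{min}}\|e(T^*)\|^2 >0$, so some block satisfies $\|g_{i^*}(T^*)\|>0$ and, by continuity, $\|g_{i^*}(t)\| \ge c>0$ on a left neighborhood of $T^*$. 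On that neighborhood every quantity entering $\alpha'$ and $\|R^T e+\delta\|$ is bounded, so $\|\dot g_{i^*}\|/\|g_{i^*}\|$ is bounded and the comparison above yields a strictly positive dwell time for agent $i^*$ near $T^*$; hence $i^*$ does not accumulate at $T^*$. Since $T^*$ is the earliest Zeno time, $i^*$ cannot accumulate at any earlier instant either, so agent $i^*$ exhibits no Zeno behavior at all, contradicting the assumption that every agent is Zeno. Thus at least one agent is non-Zeno.

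The main obstacle, and the reason the two statements take the form they do, is exactly this possible vanishing of the \emph{local} signal $\|\{R^T e\}_i\|$ relative to the global $\|R^T e\|$; the centralized proof never meets it because there the denominator is the full vector $R^T e$, bounded below through $R R^T \succ 0$. The persistence-of-excitation-type hypothesis $\|\{R^T e\}_i\|^2 \ge \epsilon\|e\|^2$ in the second statement is precisely what removes this degeneracy uniformly in $i$ and upgrades ``at least one agent'' to ``every agent.''
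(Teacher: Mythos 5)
Your treatment of the second statement is correct and follows essentially the paper's own route: the hypothesis $\|\{R(t)^T e(t)\}_i\|^2 \geq \epsilon\|e(t)\|^2$, combined with $\|R(t)^T e(t)\|^2 \leq \bar\lambda_{\text{max}}\|e(t)\|^2$, turns each local block into a fixed fraction of the global signal, after which the centralized dwell-time comparison yields a uniform positive bound. Your intermediate step of bounding $\|\delta\| \leq \sqrt{\varrho_{\text{max}}}\,\|R^T e\|$ from the per-agent thresholds, so that the comparison becomes linear, $\dot y_i \leq C(1+y_i)$, rather than the paper's quadratic $\dot\phi = \alpha(1+\phi)^2$, is a legitimate and slightly cleaner variant; both give explicit positive lower bounds on the inter-event times.

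The first statement is where you depart from the paper, and where there is a genuine gap. The paper argues directly: at any time there is, by pigeonhole, an agent $i_*$ with $\|\{R^T e\}_{i_*}\|^2 \geq \frac{1}{m}\|R^T e\|^2$, hence $\|\delta_{i_*}\|/\|\{R^T e\}_{i_*}\| \leq \sqrt{m}\,\|\delta\|/\|R^T e\|$, and the centralized growth estimate on the global ratio then gives that agent a dwell time $\tau_{i_*} = \sqrt{\varrho_{i_*}}/\bigl(\alpha(\sqrt{m}+\sqrt{\varrho_{i_*}})\bigr) > 0$. Your contradiction argument instead locates, at the earliest Zeno time $T^* = \min_i t_\infty^i$, an agent $i^*$ with $\|\{R^T e\}_{i^*}(T^*)\| > 0$ and shows that the events of $i^*$ cannot accumulate at $T^*$. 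That much is fine, but the concluding step --- ``so agent $i^*$ exhibits no Zeno behavior at all'' --- does not follow: under the standing assumption that every agent is Zeno, agent $i^*$ is only required to satisfy $t_\infty^{i^*} \geq T^*$, and if $t_\infty^{i^*} > T^*$ there is no contradiction, since its events accumulate at a later instant where its local block may well have collapsed. The agent whose Zeno time actually equals $T^*$ may be precisely one whose block vanishes at $T^*$, and your dwell-time estimate says nothing about that agent. To close the argument you would need either to show that the minimizing agent itself has a nonvanishing block at $T^*$ (not guaranteed), or to iterate the localization over all Zeno times, neither of which your proof does. A secondary point: the assertion $e(T^*) \neq 0$ rests on a backward-uniqueness claim that you state but do not justify.
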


\begin{proof}
Note that $\| \delta_i(t) \|^2 \leq \| \delta(t) \|^2$ for any $i$. In addition, there exists an agent $i_*$ such that $\|\{R(t)^T e(t)\}_{i_*}\|^2 \geq \frac{1}{m} \|\{R(t)^T e(t)\}\|^2$.
Then one has
\begin{align}
\frac{\| \delta_{i_*}(t) \|}{\|\{R(t)^T e(t)\}_{i_*}\|} \leq \sqrt{m} \frac{\| \delta(t) \|}{\|\{R(t)^T e(t)\}\|}
\end{align}

By recalling the proof in   Theorem 2, we can conclude that the inter-event interval for agent $i_*$ is bounded from below by a time $\tau_{i_*}$ that satisfies
\begin{align}
\sqrt{m} \frac{\tau_{i_*} \alpha}{1- \tau_{i_*} \alpha} = \sqrt{\gamma_{i_*} a_{i_*}(2-a_{i_*})}
\end{align}
So that $\tau_{i_*} = \frac{\sqrt{\gamma_{i_*} a_{i_*}(2-a_{i_*})}}{ \alpha (\sqrt{m}+\sqrt{\gamma_{i_*} a_{i_*}(2-a_{i_*})})} >0$. The first statement is proved.

We then prove the second statement.  Denote $\bar \lambda_{\text{max}}$ as the maximum of $\lambda_{\text{max}} (R^T R(e))$ for all $e \in \mathcal{B}(\rho)$. Since $\mathcal{B}(\rho)$ is a compact set, $\bar \lambda_{\text{max}}$ exists and is bounded. Then there holds $\|\{R(t)^T e(t)\}\|^2 \leq  \bar \lambda_{\text{max}}  \|e(t)\|^2$.  One can further show
\begin{align}
\|\{R(t)^T e(t)\}_i\|^2 \geq \epsilon \|e(t)\|^2 \geq    \frac{\epsilon}{\bar \lambda_{\text{max}}} \|\{R(t)^T e(t)\}\|^2
\end{align}
By following a similar argument to that  above and using the analysis in the proof of Theorem 2, a  lower bound on the inter-event interval $\bar \tau_i$ for each agent can be calculated as
\begin{align}
\bar \tau_i = \frac{\sqrt{\varrho_i}}{  \alpha \left(\sqrt{\frac{ \bar \lambda_{\text{max}}}{\epsilon}}+\sqrt{\varrho_i}\right)} >0, \,\,i = 1,2, \cdots, n
\end{align}
The proof is completed.
\end{proof}
\begin{remark}
 The first part of Theorem \ref{theorem:no_zeno1} is motivated by \cite[Theorem 4]{dimarogonas2012distributed}, which guarantees the exclusion of Zeno behavior for at least one agent. To improve the result for all the agents, we propose a condition in the second part of  Theorem \ref{theorem:no_zeno1}.
The above results in Theorem \ref{theorem:no_zeno1} on the distributed event-based controller are more conservative than the centralized case. The condition on the existence of $\epsilon >0$   essentially guarantees that $\|\{R(t)^T e(t)\}_i\|$ cannot be zero at any finite time, and will be zero if and only if $t = \infty$.   By a similar analysis from event-based multi-agent consensus dynamics in    \cite{SUN2018264}, one can show that if $\|\{R(t)^T e(\hat t)\}_i\| = 0$ at some finite time instant $\hat t$, then   agent $i$ will exhibit a Zeno triggering and  the time $\hat t$ is a Zeno time for agent $i$.
However, we have performed many simulations with different rigid formation shapes and observed that in most cases $\|\{R(t)^T e(t)\}_i\|$ is non-zero. We conjecture  that this may be due to the infinitesimal rigidity  of the formation shape. In the next subsection however,   we will provide a simple modification of the distributed controller to remove the condition on $\epsilon$.
\end{remark}
\subsection{A modified distributed event function}
The event function \eqref{distributed_trigger_local_function} for agent $i$ involves the comparison of two terms, i.e. $\| \delta_i(t) \|^2$ and $\gamma_i a_i (2-a_i)  \|\{R(t)^T e(t)\}_i\|^2$. As noted above,   the existence of $\epsilon >0$ can guarantee $\|\{R(t)^T e(t)\}_i\| \neq 0$   for any finite time $t$. To remove this condition in Theorem \ref{theorem:no_zeno1}, and motivated by \cite{sun2018event}, we propose the following modified event function by including an exponential decay term:
\begin{align} \label{distributed_modified_event}
f_i(t) : =  \| \delta_i(t) \|^2 &  -   \gamma_i a_i (2-a_i)  \|\{R(t)^T e(t)\}_i\|^2  \nonumber \\
 &  - 2a_i v_i \text{exp}(-\theta_i t)
\end{align}
where $v_i>0, \theta_i>0$ are  parameters that can be adjusted in the design to control the formation convergence speed.   Note that $v_i \text{exp}(-\theta_i t)$ is always positive and converges to zero when $t \rightarrow \infty$. Thus, even if $\{R(t)^T e(t)\}_i$ exhibits a crossing-zero scenario at some finite time instant,   the addition of this decay term guarantees a positive threshold value in the event function which  avoids the case of comparing  $\| \delta_i(t) \|^2$ to a zero threshold.

The main result in this subsection is to show that the above modified event function ensures Zeno-free triggering for all   agents, and also drives the formation shape to reach the target one.
\begin{theorem}
By using the proposed distributed event-based formation controller \eqref{position_system_distributed} and the modified distributed event  function \eqref{distributed_modified_event}, all the agents will reach the desired formation shape locally exponentially fast and no agent will exhibit Zeno-triggering behavior.
\end{theorem}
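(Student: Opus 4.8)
The plan is to prove the two assertions separately, treating the decay term $2a_iv_i\,\text{exp}(-\theta_i t)$ in \eqref{distributed_modified_event} as a vanishing perturbation of the analysis already carried out for Theorem~\ref{theorem:distributed_exponential} and Theorem~2. For the convergence claim I would keep the same Lyapunov function $V=\frac14\sum_{k=1}^m e_k^2$ and reuse the computation leading to \eqref{derivative_lya_distributed}. Between consecutive events of agent $i$ the rule \eqref{distributed_modified_event} enforces $f_i(t)\le 0$, which after dividing by $2a_i$ reads $\frac{1}{2a_i}\|\delta_i(t)\|^2 \le \frac{\gamma_i(2-a_i)}{2}\|\{R(t)^T e(t)\}_i\|^2 + v_i\,\text{exp}(-\theta_i t)$. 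Substituting this into the bound on $\dot V$ yields
\begin{align}
\dot V(t) \le -\sum_{i=1}^n \frac{(1-\gamma_i)(2-a_i)}{2}\|\{R(t)^T e(t)\}_i\|^2 + \sum_{i=1}^n v_i\,\text{exp}(-\theta_i t).
\end{align}

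Using Lemma~\ref{L_RP} together with $\sum_i \|\{R^T e\}_i\|^2 = \|R^T e\|^2 \ge \bar\lambda_{\text{min}}\|e\|^2 = 4\bar\lambda_{\text{min}}V$ and setting $\zeta_{\text{min}}=\min_i (1-\gamma_i)\frac{2-a_i}{2}$, this becomes $\dot V(t)\le -cV(t)+g(t)$ with $c=4\zeta_{\text{min}}\bar\lambda_{\text{min}}>0$ and $g(t)=\sum_i v_i\,\text{exp}(-\theta_i t)$ a nonnegative, exponentially decaying forcing. The comparison lemma then gives $V(t)\le V(0)\,\text{exp}(-ct)+\int_0^t \text{exp}(-c(t-s))g(s)\,\text{d}s$, and since the convolution of two decaying exponentials again decays exponentially, $V(t)$, and hence $\|e(t)\|$, converges to zero exponentially, which is the desired local exponential convergence of the shape. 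The one point requiring more care than in Theorem~\ref{theorem:distributed_exponential} is that $\dot V$ is no longer sign-definite, so to keep the trajectory inside the sub-level set $\mathcal{B}(\rho)$ on which $\bar\lambda_{\text{min}}>0$, I would require the injected energy $\int_0^\infty g(s)\,\text{d}s=\sum_i v_i/\theta_i$ to be small; taking $v_i$ small (or $\theta_i$ large) and $V(0)$ small makes $\sup_t V(t)\le\rho$, which is exactly the \emph{local} nature of the statement and closes the mild circularity by a standard continuity argument.

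For the exclusion of Zeno behavior I would bound the inter-event times from below. Immediately after an event of agent $i$ at $t_h^i$, the vector $\delta_i$ is reset to zero, and by \eqref{define_delta_i} its growth obeys $\dot\delta_i(t)=-\frac{\text{d}}{\text{d}t}\{R(t)^T e(t)\}_i$. Exactly as in Lemma~\ref{lemma:bound_RTe} and the proof of Theorem~2, $\|\frac{\text{d}}{\text{d}t}\{R^T e\}_i\|\le\|\frac{\text{d}}{\text{d}t}(R^T e)\|\le\alpha\|R^T e+\delta\|$, and because $e$ stays in the compact set $\mathcal{B}(\rho)$ and each $\|\delta_j\|$ is bounded by its (bounded) threshold, the right-hand side is bounded by a uniform constant $C$; hence $\|\delta_i(t)\|\le C(t-t_h^i)$. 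On the other hand, the modified event \eqref{distributed_modified_event} can fire only when $\|\delta_i(t)\|^2=\gamma_i a_i(2-a_i)\|\{R^T e\}_i\|^2+2a_iv_i\,\text{exp}(-\theta_i t)\ge 2a_iv_i\,\text{exp}(-\theta_i t)$, so at the next event $\|\delta_i(t_{h+1}^i)\|\ge\sqrt{2a_iv_i}\,\text{exp}(-\theta_i t_{h+1}^i/2)$. Combining the two estimates gives
\begin{align}
t_{h+1}^i - t_h^i \ge \frac{\sqrt{2a_iv_i}}{C}\,\text{exp}\!\left(-\tfrac{\theta_i}{2}t_{h+1}^i\right) > 0.
\end{align}

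Finally I would rule out a finite Zeno time by contradiction: if the events of agent $i$ accumulated at some finite $t_\infty^i$, then for all large $h$ the right-hand side above would be bounded below by the strictly positive constant $\frac{\sqrt{2a_iv_i}}{C}\,\text{exp}(-\frac{\theta_i}{2}t_\infty^i)$, forcing a uniform positive gap between consecutive events near $t_\infty^i$ and contradicting accumulation; since this holds for every $i$, no agent exhibits Zeno behavior, and the extra decay term removes the need for the condition on $\epsilon$ in Theorem~\ref{theorem:no_zeno1}. I expect the main obstacle to be precisely this last step: the lower bound on $t_{h+1}^i-t_h^i$ by itself decays to zero as $t\to\infty$ and therefore does \emph{not} supply a single uniform positive inter-event time, so the crux is the observation that finiteness of the hypothetical Zeno time keeps the exponential factor bounded away from zero on the accumulation window. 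A secondary technical point is verifying the uniform growth bound $C$ for $\delta_i$, which hinges on keeping the trajectory within $\mathcal{B}(\rho)$ as established in the convergence part, so the two halves of the proof are mildly coupled.
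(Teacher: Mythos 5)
Your proposal is correct and follows essentially the same route as the paper's proof: the same Lyapunov function with the triggering inequality substituted to get $\dot V \le -4\zeta_{\text{min}}\bar\lambda_{\text{min}}V + \sum_i v_i\,\text{exp}(-\theta_i t)$ followed by the comparison principle, and Zeno exclusion via a bound on the growth rate of $\|\delta_i\|$ against the strictly positive threshold contributed by the decay term in \eqref{distributed_modified_event}. The two places where you go beyond the paper's write-up --- making explicit that a per-interval lower bound decaying like $\text{exp}(-\theta_i t_{h+1}^i/2)$ only rules out accumulation because a hypothetical Zeno time is finite, and noting that invariance of $\mathcal{B}(\rho)$ now requires $\sum_i v_i/\theta_i$ small since $\dot V$ is no longer sign-definite --- are legitimate refinements of steps the paper leaves implicit rather than a different argument.
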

\begin{proof}

We consider the same Lyapunov function as used in Theorem 1 and Theorem 3 and follow similar steps as above. The triggering condition from the modified event function  \eqref{distributed_modified_event} yields
\begin{align} \label{derivative_lya_distributed_modified}
\dot V(t) \leq \sum_{i=1}^n (\gamma_i -1) \frac{2-a_i}{2}\|\{R(t)^T e(t)\}_i\|^2 + v_i \text{exp}(-\theta_i t)
\end{align}
which follows that
\begin{align}
\dot V(t) \leq - 4\zeta_{\text{min}} \bar \lambda_{\text{min}} V(t) + \sum_{i=1}^n v_i \text{exp}(-\theta_i t)
\end{align}
where $\zeta_{\text{min}}$ is defined as the same to the notation in Theorem 3 (i.e. $\zeta_{\text{min}} = \text{min}_{i} (1- \gamma_i ) \frac{2-a_i}{2}$). For notational convenience, we define $\kappa = 2  \zeta_{\text{min}}  \bar \lambda_{\text{min}}$ (also the same to Theorem 3).
By the well-known comparison principle \cite[Chapter 3.4]{khalil1996nonlinear}, it further follows that
\begin{align}
V(t) \leq \,\, & \text{exp}(- 2 \kappa t) V(0) \nonumber \\
& + \sum_{i=1}^n \frac{v_i}{2 \kappa - \theta_i}  (\text{exp}(-\theta_i t) - \text{exp}(-2 \kappa t))
\end{align}
which implies that $V(t) \rightarrow 0$ as $t \rightarrow \infty$, or equivalently, $\|e(t)\| \rightarrow 0$, as $t   \rightarrow \infty$.

In the following analysis showing exclusion of   Zeno behavior we let $t \in [t_h^i, t_{h+1}^i)$.
We first show a sufficient condition to guarantee $f_i(t) \leq 0$ when $f_i(t)$ is defined in \eqref{distributed_modified_event}. Note that $f_i(t) \leq 0$  can be equivalently stated as
\begin{align} \label{eq:inequality_event_condition}
&(\varrho_i  +1 ) \| \delta_i(t) \|^2  \nonumber \\
&\leq  \varrho_i  ( \| \delta_i(t) \|^2 + \|\{R(t)^T e(t)\}_i\|^2) +  2a_i v_i \text{exp}(-\theta_i t)
\end{align}
where $\varrho_i$ is defined in \eqref{eq:varrho}. Note that
\begin{align}
\|\{R(t_h^i)^T e(t_h^i)\}_i\|^2 = &\| \delta_i(t)  + \{R(t)^T e(t)\}_i\|^2   \nonumber \\
\leq  & 2 (\| \delta_i(t) \|^2 + \|\{R(t)^T e(t)\}_i\|^2)
\end{align}
Thus, a sufficient condition to guarantee the above inequality \eqref{eq:inequality_event_condition} (and the inequality $f_i(t) \leq 0$) is
\begin{align} \label{eq:sufficient_condition_event}
 \| \delta_i(t) \|^2   \leq  \frac{\varrho_i}{(2\varrho_i  +2 )}  \|\{R(t_h^i)^T e(t_h^i)\}_i\|^2
  + \frac{2a_i v_i}{\varrho_i  +1}  \text{exp}(-\theta_i t)
\end{align}

Note that from \eqref{eq:definition_delta} there holds $\dot \delta_i = - \frac{\text{d}}{\text{d}t} \{R(t)^T e(t)\}_i$. It follows that
\begin{small}
\begin{align}
\frac{\text{d}}{\text{d}t} \|\delta_i(t)\| & \leq \frac{\|\delta_i(t)^T\|}{\|\delta_i(t)\|} \|\dot \delta_i(t)\| \nonumber \\
&  = \|\frac{\text{d}}{\text{d}t} \{R(t)^T e(t)\}_i \|  \nonumber \\
& = \|\sum_{j \in \mathcal{N}_i} (\dot p_j(t) - \dot p_i(t)) e_k(t) +  \sum_{j \in \mathcal{N}_i} ( p_j(t) -  p_i(t)) \dot e_k(t)\| \nonumber \\
& = \|\sum_{j \in \mathcal{N}_i} (e_{k_{ij}}(t) \otimes I_d + 2 z_{k_{ij}}(t) z_{k_{ij}}(t)^T) (\dot p_j(t) - \dot p_i(t)) \| \nonumber \\
& = \|\sum_{j \in \mathcal{N}_i} Q_{ij}(t) (\{R(t_{h'}^j)^T e(t_{h'}^j)\}_j - \{R(t_h^i)^T e(t_h^i)\}_i) \| \nonumber \\
& \leq \sum_{j \in \mathcal{N}_i} \|Q_{ij}(t) \| \|(\{R(t_{h'}^j)^T e(t_{h'}^j)\}_j - \{R(t_h^i)^T e(t_h^i)\}_i) \|  \nonumber \\
& := \alpha_i
\end{align}
\end{small}
where $Q_{ij}(t) : = e_{k_{ij}}(t) \otimes I_d + 2 z_{k_{ij}}(t) z_{k_{ij}}(t)^T$, and $t_{h'}^j = \text{arg max}_{h} \{t_h^j | t_h^j \leq t, j \in \mathcal{N}_i \}$. By a straightforward argument similar to Lemma  \ref{lemma:feasibility}, it can be shown that agent $i$   will not exhibit singular
triggering, which indicates $\alpha_i$ cannot be zero for all time intervals.  Also note that $\alpha_i$ is upper bounded by a positive constant which implies that $\frac{\text{d}}{\text{d}t} \|\delta_i(t)\|$ is upper bounded. From the sufficient condition given in \eqref{eq:sufficient_condition_event} which guarantees the event triggering condition $f_i(t) \leq 0$ shown in \eqref{distributed_modified_event}, the next inter-event interval for agent $i$ is lower bounded by the solution $\tau_{h}^i$ of the following equation

\begin{small}
\begin{align} \label{eq:Zeno_solution}
\tau_{h}^i \alpha_i = \sqrt{\frac{\varrho_i}{(2\varrho_i  +2 )}  \|\{R(t_h^i)^T e(t_h^i)\}_i\|^2
  + \frac{2a_i v_i}{\varrho_i  +1}  \text{exp}(-\theta_i (t_h^i + \tau_{h}^i))}
\end{align}
\end{small}

We note the solution $\tau_{h}^i$ to \eqref{eq:Zeno_solution} always exists and is positive. Thus, no agents will exhibit Zeno-triggering behavior with the modified event function \eqref{distributed_modified_event}.
\end{proof}

\begin{remark}
In the  modified  event  function \eqref{distributed_modified_event}, a positive and exponential decay term $v_i \text{exp}(-\theta_i t)$ is included which guarantees that, even if $\{R(t)^T e(t)\}_i$ becomes zero at some finite time, the inter-event time interval at any finite time is positive and thus Zeno triggering is excluded. A more general strategy for designing a Zeno-free event  function is to include a positive $\mathcal{L}^p$ signal in the event function \eqref{distributed_modified_event}. By following a similar analysis to \cite{sun2018event}, one can show that, if the term $v_i \text{exp}(-\theta_i t)$ in \eqref{distributed_modified_event} is replaced by a general positive $\mathcal{L}^p$ signal, the local convergence of the event-based formation control system \eqref{position_system_distributed} with Zeno-free triggering for all agents still holds.
However, a general $\mathcal{L}^p$ signal in \eqref{distributed_modified_event} may only guarantee an asymptotic convergence rather than an exponential convergence of the overall formation system \eqref{position_system_distributed}.
\end{remark}


%
%

\section{Simulation results}  \label{sec:simulations}
In this section we provide three sets of simulations to show the behavior of certain formations with a centralized event-based controller and a distributed event-based controller, respectively.  Consider a double tetrahedron formation in $\mathbb{R}^3$, with the desired distances for each edge being $2$. The initial conditions for each agent are chosen as $p_1(0) = [0,-1.0,0.5]^T$, $p_2(0) = [1.8,1.6,-0.1]^T$, $p_3(0) = [-0.2,1.8,0.05]^T$, $p_4(0) = [1.2,1.9,1.7]^T$ and $p_5(0) = [-1.0,-1.5,-1.2]^T$, so that the initial formation is close to the target one. The parameter $\gamma$ in the trigger function is set as $\gamma = 0.6$. Figs.~2-4   illustrate formation convergence and event performance with a centralized event triggering function. The trajectories of each agent, together with the initial shape and final shape are depicted in Fig. 2. The trajectories of each distance error are depicted in Fig. 3, which shows an exponential convergence to the origin. Fig. 4 shows the triggering time instant and the evolution of the norm of the vector $\delta$ in the triggering function \eqref{trigger_function}, which is obviously bounded below by $\gamma \|R(t)^T e(t)\|$ as required by \eqref{delta_norm}.

\begin{figure}
  \centering
  \includegraphics[width=75mm]{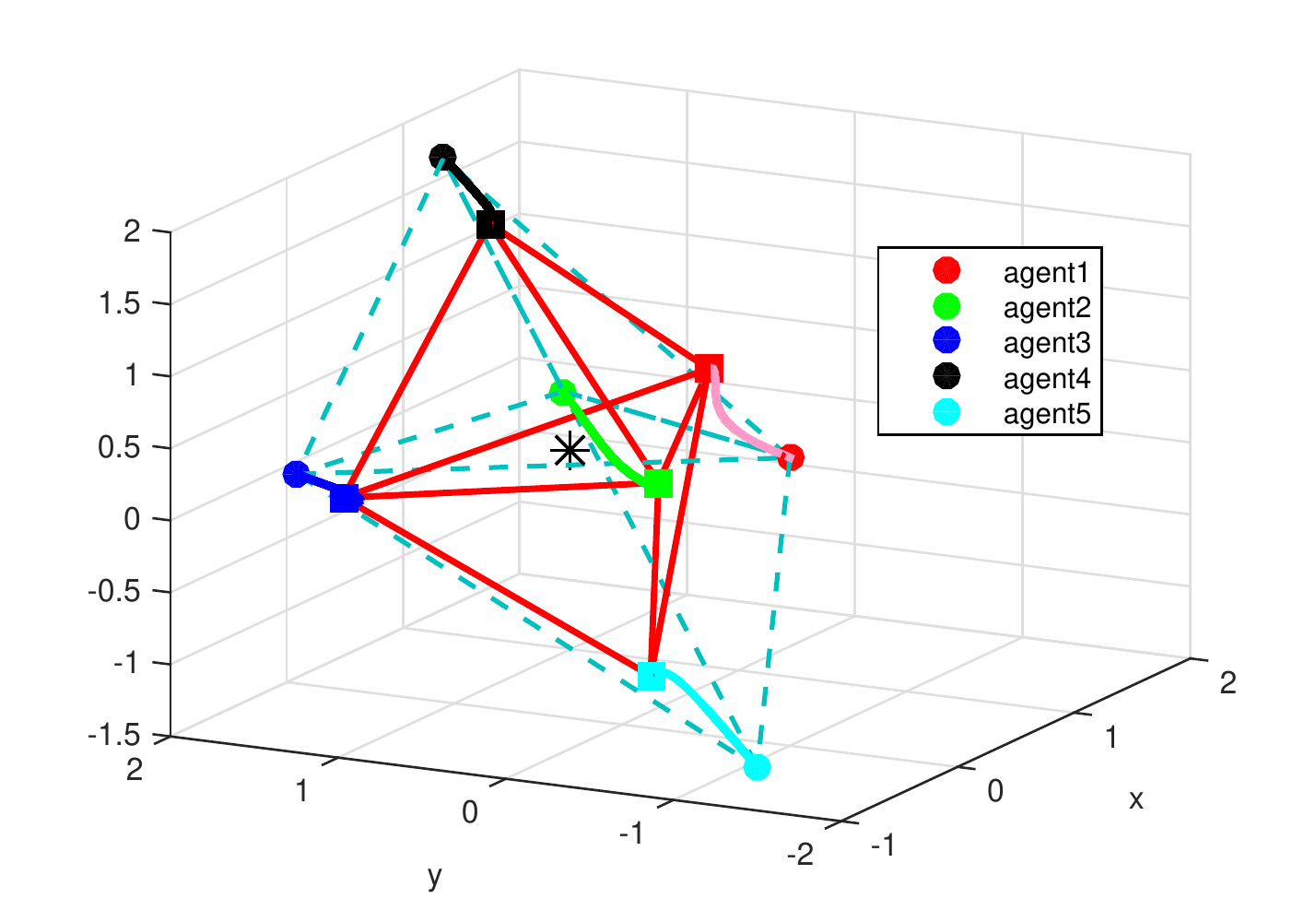}
  \caption{Simulation on stabilization control of a double tetrahedron formation in 3-D space with centralized event controller. The initial and final positions are denoted by circles and squares, respectively. The initial formation is denoted by dashed lines, and the final formation is denoted by red solid lines. The black star denotes the formation centroid, which is stationary.  }
  \label{shape_new}
  \label{fig:env}
\end{figure}
\begin{figure}
  \centering
  \includegraphics[width=60mm]{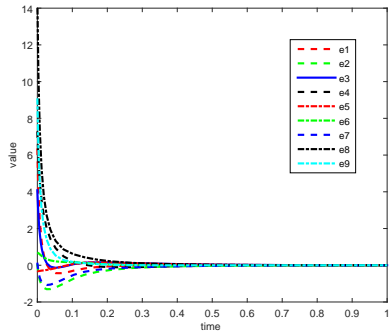}
  \caption{Exponential convergence of the distance errors with centralized event controller. }
  \label{distance_error}
  \label{fig:env}
\end{figure}
\begin{figure}
  \centering
  \includegraphics[width=70mm]{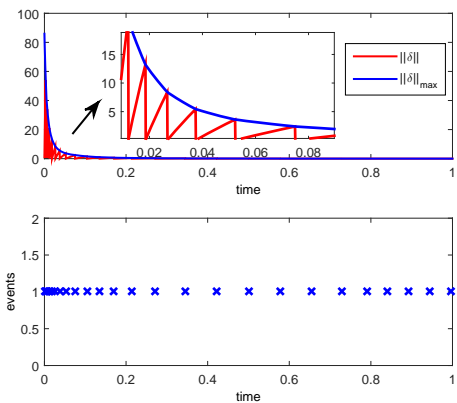}
  \caption{Performance of the centralized event-based controller.
  Top: evolution of $\|\delta\|$ and $\|\delta\|_{\max} = \gamma \|R(t)^T e(t)\|$.  Bottom: event triggering instants.}
  \label{shape_new}
  \label{fig:env}
\end{figure}

\begin{figure}
  \centering
  \includegraphics[width=75mm]{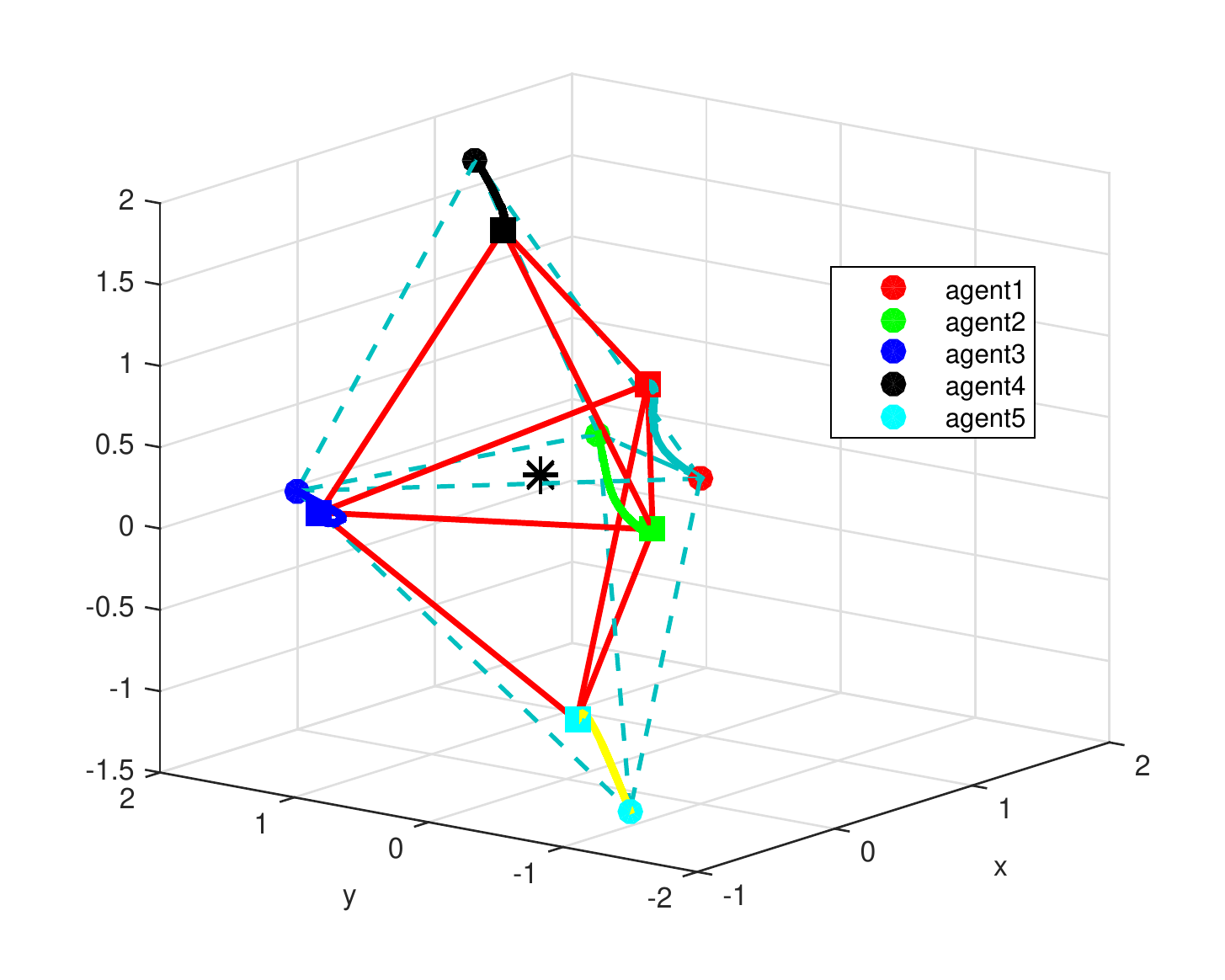}
  \caption{Simulation on stabilization control of a double tetrahedron formation in 3-D space with  distributed event controller. The initial and final positions are denoted by circles and squares, respectively. The initial formation is denoted by dashed lines, and the final formation is denoted by red solid lines. The black star denotes the formation centroid, which is \emph{not} stationary. }
\end{figure}
\begin{figure*}[t]
  \centering
  \includegraphics[width=130mm]{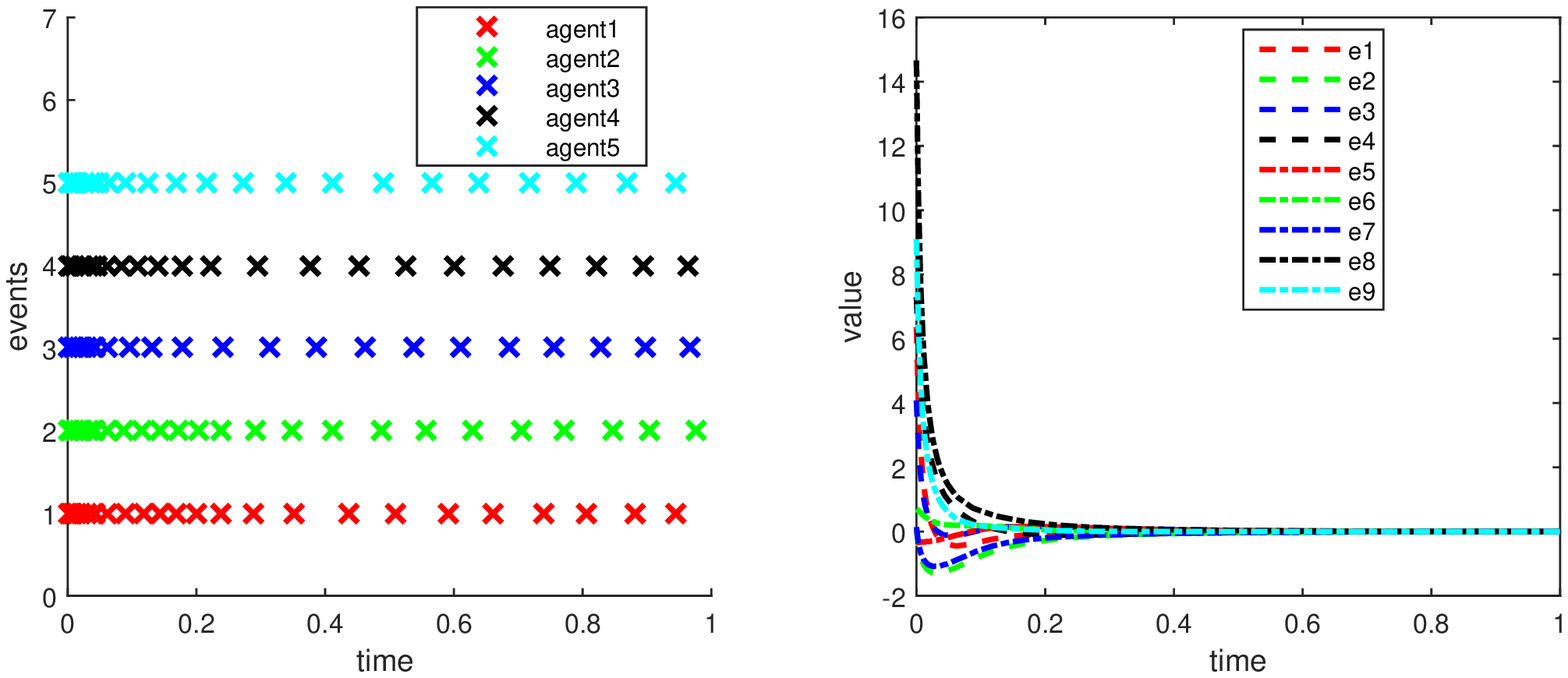}
  \caption{Controller performance of the distributed event-based formation system \eqref{position_system_distributed} with distributed event function \eqref{distributed_trigger_local_function}. Left:  Event triggering instants for each agent. Right: Exponential convergence of the distance errors with distributed event controller. }
\end{figure*}

We then perform another simulation on stabilizing the same formation shape by applying the proposed distributed event-based controller  \eqref{position_system_distributed} and distributed event function \eqref{distributed_trigger_local_function}. Agents' initial positions are set the same as for the simulation with the centralized event controller. The parameters $\gamma_{i}, i=1,2,\ldots,5$ are set as $0.8$ and $a_{i}, i=1,2,\ldots,5$ are set as $0.6$. The trajectories of each agent, together with the initial shape and final shape are depicted in Fig. 5. The   event times for each agent and the exponential convergence of each distance error are depicted in Fig. 6. Note that no $\|\{R(t)^T e(t)\}_i\|$ crosses zero values at any finite time and Zeno-triggering is excluded. Furthermore,   the distance error system shown in Fig. 6 also demonstrates almost the same convergence property as shown in Fig. 3.


Lastly, we  show simulations with the same formation shape  by using the modified event triggering function \eqref{distributed_modified_event}. The exponential decay term is chosen as $v_i \text{exp}(-\theta_i t) = \text{exp}(-10 t)$ with $v_i = 1$ and $\theta_i =10$ for each agent.  Fig.~7 shows   event-triggering  times for each agent   as well as the  convergence of each distance error. As can be observed from  Fig. 7,   Zeno-triggering is strictly excluded with the modified event function \eqref{distributed_modified_event}, while the convergence of the distance error system behaves almost the same to Fig. 3 and Fig.~6.  It should be noted that in comparison with the controller performance and simulation examples discussed in \cite{sun2015event, Liu2015event, CCC_CHENFEI}, the proposed event-based rigid formation controllers in this paper demonstrate equal or even better performance, while complicated controllers and unnecessary assumptions in  \cite{sun2015event, Liu2015event, CCC_CHENFEI} are avoided.
\begin{figure*}
  \centering
  \includegraphics[width=130mm]{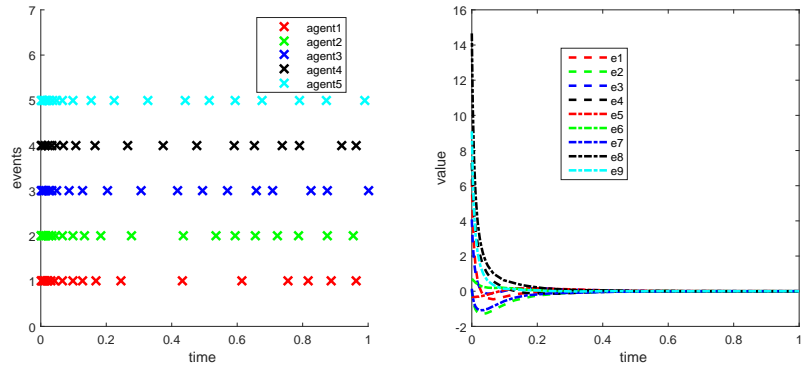}
  \caption{Controller performance of the distributed event-based formation system \eqref{position_system_distributed} (event function \eqref{distributed_modified_event} with an exponential decay term). Left:  Event triggering instants for each agent. Right: Exponential convergence of the distance errors with distributed event controller. }
\end{figure*}








\section{Conclusions}  \label{sec:conclusions}
In this paper we have discussed in detail the design of feasible event-based controllers to stabilize rigid formation shapes. A centralized event-based formation control is proposed first, which guarantees the exponential convergence of   distance errors  and also excludes the existence of Zeno triggering. Due to a careful design of the triggering error and event function, the controllers are much simpler and require  much less computation/measurement resources, compared with the results reported in   \cite{CCC_CHENFEI, sun2015event, Liu2015event }. We then further propose a distributed event-based controller such that each agent can  trigger a local event to update its control input based on only local measurement. The event  feasibility and triggering behavior have been discussed in detail, which also guarantees   Zeno-free behavior  for the event-based formation system and exponential convergence of the distance error system. A modified distributed event function is proposed, by which the Zeno triggering is strictly excluded for each individual agent.  A  future topic is to explore possible extensions of the current results on single-integrator models to the double-integrator rigid formation system \cite{TAC_COMBINED} with event-based control strategy to enable velocity consensus and rigid flocking behavior.

\section*{Acknowledgment}
This work was   supported by the National Science Foundation
of China under Grants 61761136005, 61703130 and 61673026,  and also supported by the Australian Research Council's Discovery Project DP-160104500 and DP-190100887.



%
\bibliography{Event_Formation}

\begin{thebibliography}{10}
\providecommand{\url}[1]{#1}
\csname url@samestyle\endcsname
\providecommand{\newblock}{\relax}
\providecommand{\bibinfo}[2]{#2}
\providecommand{\BIBentrySTDinterwordspacing}{\spaceskip=0pt\relax}
\providecommand{\BIBentryALTinterwordstretchfactor}{4}
\providecommand{\BIBentryALTinterwordspacing}{\spaceskip=\fontdimen2\font plus
\BIBentryALTinterwordstretchfactor\fontdimen3\font minus
  \fontdimen4\font\relax}
\providecommand{\BIBforeignlanguage}[2]{{%
\expandafter\ifx\csname l@#1\endcsname\relax
\typeout{** WARNING: IEEEtran.bst: No hyphenation pattern has been}%
\typeout{** loaded for the language `#1'. Using the pattern for}%
\typeout{** the default language instead.}%
\else
\language=\csname l@#1\endcsname
\fi
#2}}
\providecommand{\BIBdecl}{\relax}
\BIBdecl

\bibitem{oh2012survey}
K.-K. Oh, M.-C. Park, and H.-S. Ahn, ``A survey of multi-agent formation
  control,'' \emph{Automatica}, vol.~53, pp. 424--440, 2015.

\bibitem{anderson2008rigid}
B.~D.~O. Anderson, C.~Yu, B.~Fidan, and J.~M. Hendrickx, ``Rigid graph control
  architectures for autonomous formations,'' \emph{IEEE Control Systems
  Magazine}, vol.~28, no.~6, pp. 48--63, 2008.

\bibitem{krick2009stabilisation}
L.~Krick, M.~E. Broucke, and B.~A. Francis, ``Stabilisation of infinitesimally
  rigid formations of multi-robot networks,'' \emph{International Journal of
  Control}, vol.~82, no.~3, pp. 423--439, 2009.

\bibitem{meng2016formation}
Z.~Meng, B.~D.~O. Anderson, and S.~Hirche, ``Formation control with mismatched
  compasses,'' \emph{Automatica}, vol.~69, pp. 232--241, 2016.

\bibitem{meng20183}
------, ``On {3-D} formation control with mismatched coordinates,'' \emph{IEEE
  Transactions on Control of Network Systems}, vol.~5, no.~3, pp. 1492--1502,
  2018.

\bibitem{dorfler2010geometric}
F.~Dorfler and B.~Francis, ``Geometric analysis of the formation problem for
  autonomous robots,'' \emph{IEEE Transactions on Automatic Control}, vol.~55,
  no.~10, pp. 2379--2384, 2010.

\bibitem{oh2014distance}
K.-K. Oh and H.-S. Ahn, ``Distance-based undirected formations of
  single-integrator and double-integrator modeled agents in n-dimensional
  space,'' \emph{International Journal of Robust and Nonlinear Control},
  vol.~24, no.~12, pp. 1809--1820, 2014.

\bibitem{anderson2014counting}
B.~D.~O. Anderson and U.~Helmke, ``Counting critical formations on a line,''
  \emph{SIAM Journal on Control and Optimization}, vol.~52, no.~1, pp.
  219--242, 2014.

\bibitem{astrom2008event}
K.~J. Astr{\"o}m, ``Event based control,'' in \emph{Analysis and design of
  nonlinear control systems}.\hskip 1em plus 0.5em minus 0.4em\relax Springer,
  2008, pp. 127--147.

\bibitem{heemels2012introduction}
W.~Heemels, K.~H. Johansson, and P.~Tabuada, ``An introduction to
  event-triggered and self-triggered control.'' in \emph{Proc. of the 51st
  Conference on Decision and Control}, 2012, pp. 3270--3285.

\bibitem{sun2018event}
Z.~Sun, N.~Huang, B.~D.~O. Anderson, and Z.~Duan, ``{Event-based multiagent
  consensus control: Zeno-free triggering via Lp signals},'' \emph{IEEE
  Transactions on Cybernetics, article in press}, pp. 1--13, 2018.

\bibitem{ZhongkuiTNNLS}
B.~Cheng and Z.~Li, ``Designing fully distributed adaptive event-triggered
  controllers for networked linear systems with matched uncertainties,''
  \emph{IEEE Transactions on Neural Networks and Learning Systems}, pp. 1--11,
  2018.

\bibitem{zhang2018cooperative}
Y.~Zhang, H.~Li, J.~Sun, and W.~He, ``Cooperative adaptive event-triggered
  control for multiagent systems with actuator failures,'' \emph{IEEE
  Transactions on Systems, Man, and Cybernetics: Systems}, pp. 1--11, 2018.

\bibitem{Zhongkui2018TAC}
B.~Cheng and Z.~Li, ``Fully distributed event-triggered protocols for linear
  multi-agent networks,'' \emph{IEEE Transactions on Automatic Control}, pp.
  1--8, 2018.

\bibitem{seyboth2012event}
G.~S. Seyboth, D.~V. Dimarogonas, and K.~H. Johansson, ``Event-based
  broadcasting for multi-agent average consensus,'' \emph{Automatica}, vol.~49,
  no.~1, pp. 245--252, 2012.

\bibitem{Duan2018}
K.~Liu, P.~Duan, Z.~Duan, H.~Cai, and J.~Lu, ``Leader-following consensus of
  multi-agent systems with switching networks and event-triggered control,''
  \emph{IEEE Transactions on Circuits and Systems I: Regular Papers}, vol.~65,
  no.~5, pp. 1696--1706, May 2018.

\bibitem{qin2018leader}
J.~Qin, W.~Fu, Y.~Shi, H.~Gao, and Y.~Kang, ``Leader-following practical
  cluster synchronization for networks of generic linear systems: an
  event-based approach,'' \emph{IEEE Transactions on Neural Networks and
  Learning Systems}, pp. 1--10, 2018.

\bibitem{jiang2018synchronization}
C.~Jiang, H.~Du, W.~Zhu, L.~Yin, X.~Jin, and G.~Wen, ``Synchronization of
  nonlinear networked agents under event-triggered control,'' \emph{Information
  Sciences}, vol. 459, pp. 317--326, 2018.

\bibitem{Qingchen2018}
Q.~Liu, M.~Ye, J.~Qin, and C.~Yu, ``Event-triggered algorithms for
  leader-follower consensus of networked {Euler-Lagrange} agents,'' \emph{IEEE
  Transactions on Systems, Man, and Cybernetics: Systems}, pp. 1--13, 2018.

\bibitem{zhao2018edge}
Y.~Zhao, Y.~Liu, G.~Wen, W.~Ren, and G.~Chen, ``Edge-based finite-time protocol
  analysis with final consensus value and settling time estimations,''
  \emph{IEEE Transactions on Cybernetics}, pp. 1--10, 2018.

\bibitem{duan2018event}
P.~Duan, K.~Liu, N.~Huang, and Z.~Duan, ``Event-based distributed tracking
  control for second-order multiagent systems with switching networks,''
  \emph{IEEE Transactions on Systems, Man, and Cybernetics: Systems, article in
  press}, pp. 1--11, 2018.

\bibitem{ge2018survey}
X.~Ge, Q.-L. Han, D.~Ding, X.-M. Zhang, and B.~Ning, ``A survey on recent
  advances in distributed sampled-data cooperative control of multi-agent
  systems,'' \emph{Neurocomputing}, vol. 275, pp. 1684--1701, 2018.

\bibitem{zhang2017overview}
X.-M. Zhang, Q.-L. Han, and B.-L. Zhang, ``An overview and deep investigation
  on sampled-data-based event-triggered control and filtering for networked
  systems,'' \emph{IEEE Transactions on Industrial Informatics}, vol.~13,
  no.~1, pp. 4--16, 2017.

\bibitem{PENG2018113}
C.~Peng and F.~Li, ``A survey on recent advances in event-triggered
  communication and control,'' \emph{Information Sciences}, vol. 457-458, pp.
  113 -- 125, 2018.

\bibitem{nowzari2017event}
C.~Nowzari, E.~Garcia, and J.~Cortes, ``Event-triggered communication and
  control of network systems for multi-agent consensus,'' \emph{Automatica, to
  appear}, pp. 1--36, 2018.

\bibitem{li2018event}
X.~Li, X.~Dong, Q.~Li, and Z.~Ren, ``Event-triggered time-varying formation
  control for general linear multi-agent systems,'' \emph{Journal of the
  Franklin Institute, article in press}, pp. 1--17, 2018.

\bibitem{ge2017distributed}
X.~Ge and Q.-L. Han, ``Distributed formation control of networked multi-agent
  systems using a dynamic event-triggered communication mechanism,'' \emph{IEEE
  Transactions on Industrial Electronics}, vol.~64, no.~10, pp. 8118--8127,
  2017.

\bibitem{viel2017distributed}
C.~Viel, S.~Bertrand, M.~Kieffer, and H.~Piet-Lahanier, ``Distributed
  event-triggered control for multi-agent formation stabilization and
  tracking,'' \emph{arXiv preprint arXiv:1709.06652}, 2017.

\bibitem{yi2016formation}
X.~Yi, J.~Wei, D.~V. Dimarogonas, and K.~H. Johansson, ``Formation control for
  multi-agent systems with connectivity preservation and event-triggered
  controllers,'' \emph{arXiv preprint arXiv:1611.03105}, 2017.

\bibitem{Cortes_team_triggered}
C.~Nowzari and J.~Cortes, ``Team-triggered coordination for real-time control
  of networked cyber-physical systems,'' \emph{IEEE Transactions on Automatic
  Control}, vol.~61, no.~1, pp. 34--47, Jan 2016.

\bibitem{sun2015event}
Z.~Sun, Q.~Liu, C.~Yu, and B.~D.~O. Anderson, ``{Generalized controllers for
  rigid formation stabilization with application to event-based controller
  design},'' in \emph{Proc. of the 2015 European Control Conference
  (ECC'15)}.\hskip 1em plus 0.5em minus 0.4em\relax IEEE, 2015, pp. 217--222.

\bibitem{Liu2015event}
Q.~Liu, Z.~Sun, J.~Qin, and C.~Yu, ``{Distance-based formation shape
  stabilisation via event-triggered control},'' in \emph{Proc. of the 34th
  Chinese Control Conference(CCC'15)}.\hskip 1em plus 0.5em minus 0.4em\relax
  IEEE, 2015, pp. 6948--6953.

\bibitem{CCC_CHENFEI}
L.~Bai, F.~Chen, and W.~Lan, ``Decentralized event-triggered control for rigid
  formation tracking,'' in \emph{Proc. of the 34th Chinese Control Conference
  (CCC'15)}.\hskip 1em plus 0.5em minus 0.4em\relax IEEE, 2015, pp. 1262--1267.

\bibitem{SMA14TACsub}
S.~Mou, A.~S. Morse, M.~A. Belabbas, Z.~Sun, and B.~D.~O. Anderson,
  ``Undirected rigid formations are problematic,'' \emph{IEEE Transactions on
  Automatic Control}, vol.~61, no.~10, pp. 2821--2836, 2016.

\bibitem{sun2017rigid}
Z.~Sun, S.~Mou, B.~D.~O. Anderson, and A.~S. Morse, ``Rigid motions of {3-D}
  undirected formations with mismatch between desired distances,'' \emph{IEEE
  Transactions on Automatic Control}, vol.~62, no.~8, pp. 4151--4158, 2017.

\bibitem{horn2012matrix}
R.~A. Horn and C.~R. Johnson, \emph{Matrix analysis}.\hskip 1em plus 0.5em
  minus 0.4em\relax {Cambridge University Press}, 2012.

\bibitem{asimow1979rigidity}
L.~Asimow and B.~Roth, ``{The rigidity of graphs, II},'' \emph{Journal of
  Mathematical Analysis and Applications}, vol.~68, no.~1, pp. 171--190, 1979.

\bibitem{hendrickson1992conditions}
B.~Hendrickson, ``Conditions for unique graph realizations,'' \emph{SIAM
  Journal on Computing}, vol.~21, no.~1, pp. 65--84, 1992.

\bibitem{sun2016exponential}
Z.~Sun, S.~Mou, B.~D. Anderson, and M.~Cao, ``Exponential stability for
  formation control systems with generalized controllers: A unified approach,''
  \emph{Systems \& Control Letters}, vol.~93, pp. 50--57, 2016.

\bibitem{cortes2008discontinuous}
J.~Cortes, ``Discontinuous dynamical systems,'' \emph{IEEE Control Systems},
  vol.~28, no.~3, 2008.

\bibitem{sun2014finite}
Z.~Sun, S.~Mou, M.~Deghat, and B.~D.~O. Anderson, ``Finite time distributed
  distance-constrained shape stabilization and flocking control for
  d-dimensional undirected rigid formations,'' \emph{International Journal of
  Robust and Nonlinear Control}, vol.~26, no.~13, pp. 2824--2844, 2016.

\bibitem{SE_N_INVARIANCE}
C.-I. Vasile, M.~Schwager, and C.~Belta, ``Translational and rotational
  invariance in networked dynamical systems,'' \emph{IEEE Transactions on
  Control of Network Systems}, vol.~5, no.~3, pp. 822--832, 2018.

\bibitem{dorfler_thesis}
F.~Dorfler, ``Geometric analysis of the formation control problem for
  autonomous robots,'' Diploma thesis, University of Toronto. Supervisor: Prof.
  Bruce Francis, 2008.

\bibitem{zhiyong_CDC15}
Z.~Sun, U.~Helmke, and B.~D.~O. Anderson., ``Rigid formation shape control in
  general dimensions: an invariance principle and open problems,'' in
  \emph{Proc. of the 54th Conference on Decision and Control}.\hskip 1em plus
  0.5em minus 0.4em\relax IEEE, 2015, pp. 6095--6100.

\bibitem{park2018distance}
M.-C. Park, Z.~Sun, B.~D.~O. Anderson, and H.-S. Ahn, ``Distance-based control
  of {$K_n$} formations in general space with almost global convergence,''
  \emph{IEEE Transactions on Automatic Control}, vol.~63, no.~8, pp.
  2678--2685, 2018.

\bibitem{ramazani2017rigidity}
S.~Ramazani, R.~Selmic, and M.~de~Queiroz, ``Rigidity-based multiagent layered
  formation control,'' \emph{IEEE Transactions on Cybernetics}, vol.~47, no.~8,
  pp. 1902--1913, 2017.

\bibitem{Raik_SIAM}
R.~Suttner and Z.~Sun, ``Formation shape control based on distance measurements
  using {Lie} bracket approximations,'' \emph{SIAM Journal on Control and
  Optimization}, vol.~56, no.~6, pp. 4405--4433, 2018.

\bibitem{ames2006stability}
A.~D. Ames, P.~Tabuada, and S.~Sastry, ``{On the stability of Zeno
  equilibria},'' in \emph{International Workshop on Hybrid Systems: Computation
  and Control}.\hskip 1em plus 0.5em minus 0.4em\relax Springer, 2006, pp.
  34--48.

\bibitem{tabuada2007event}
P.~Tabuada, ``Event-triggered real-time scheduling of stabilizing control
  tasks,'' \emph{IEEE Transactions on Automatic Control}, vol.~52, no.~9, pp.
  1680--1685, 2007.

\bibitem{fan2013distributed}
Y.~Fan, G.~Feng, Y.~Wang, and C.~Song, ``Distributed event-triggered control of
  multi-agent systems with combinational measurements,'' \emph{Automatica},
  vol.~49, no.~2, pp. 671--675, 2013.

\bibitem{dimarogonas2012distributed}
D.~V. Dimarogonas, E.~Frazzoli, and K.~H. Johansson, ``Distributed
  event-triggered control for multi-agent systems,'' \emph{IEEE Transactions on
  Automatic Control}, vol.~57, no.~5, pp. 1291--1297, 2012.

\bibitem{SUN2018264}
Z.~Sun, N.~Huang, B.~D.~O. Anderson, and Z.~Duan, ``Comments on `distributed
  event-triggered control of multi-agent systems with combinational
  measurements','' \emph{Automatica}, vol.~92, pp. 264 -- 265, 2018.

\bibitem{khalil1996nonlinear}
H.~K. Khalil, \emph{Nonlinear systems}.\hskip 1em plus 0.5em minus 0.4em\relax
  Prentice hall New Jersey, 1996.

\bibitem{TAC_COMBINED}
M.~Deghat, B.~D.~O. Anderson, and Z.~Lin, ``Combined flocking and
  distance-based shape control of multi-agent formations,'' \emph{IEEE
  Transactions on Automatic Control}, vol.~61, no.~7, pp. 1824--1837, 2016.

\end{thebibliography}
\bibliographystyle{IEEEtran}

\end{document}